\newtheorem{theorem}{Theorem}
\newtheorem{remark}[theorem]{Remark}
\newtheorem{lemma}[theorem]{Lemma}
\newcommand{\BE}{{\mathbb{E}}}
\newcommand{\BP}{{\mathbb{P}}}
\newcommand{\BQ}{{\mathbb{Q}}}
\newcommand{\BR}{{\mathbb{R}}}
\newcommand{\CA}{{\cal A}}
\newcommand{\CO}{{\cal O}}
\newcommand{\bX}{{\bf X}}
\newcommand{\by}{{\bf y}}
\DeclareMathOperator{\Exp}{Exp}
\DeclareMathOperator*{\argmax}{argmax}
\DeclareMathSymbol{\shortminus}{\mathbin}{AMSa}{"39}
\def\indicator{\mathds{1}}
\def\uv{{(u,v]}}
\def\us{{(u,s]}}
\def\sv{{(s,v]}}
\newcommand{\seg}[2]{{(#1, #2]}}
\def\segk{{\seg{\alpha_{k \shortminus 1}}{\alpha_k}}}
\def\segkzero{{\seg{\alpha_{k \shortminus 1}^0}{\alpha_k^0}}}
\def\segj{{\seg{\alpha_{j \shortminus 1}}{\alpha_j}}}
\DeclarePairedDelimiterX{\infdivx}[2]{(}{)}{%
  #1\;\delimsize\|\;#2%
}
\newcommand{\DKL}{D_{KL}\infdivx}
\newcommand{\dTV}[2]{\dd_{\mathrm{TV}}(#1, #2)}
\let\phi\varphi
\let\theta\vartheta
\let\epsilon\varepsilon
\let\emptyset\varnothing
\let\leq\leqslant
\let\geq\geqslant
\long\def\contrib#1{\vskip 0.3in\noindent{\large\bf Authors’ contributions}\vskip 0.2in
\noindent #1}
\long\def\acks#1{\vskip 0.3in\noindent{\large\bf Acknowledgments}\vskip 0.2in
\noindent #1}
\def\keywordname{{\bfseries \emph{Keywords}}}%
\def\keywords#1{\par\addvspace\medskipamount{\rightskip=0pt plus1cm
\def\and{\ifhmode\unskip\nobreak\fi\ $\cdot$
}\noindent\keywordname\enspace\ignorespaces#1\par}}
\begin{document}
\title{Random Forests for Change Point Detection}
\author{Malte Londschien${}^{1,2}$, Peter B\"uhlmann${}^{1}$, Solt Kov\'acs${}^{1}$\\
\vspace{0.1cm}\\
{\small${}^{1}$Seminar for Statistics, ETH Z\"urich, Switzerland}\\
{\small${}^{2}$AI Center, ETH Z\"urich, Switzerland}}

\date{May 2023}

\maketitle

\renewenvironment{abstract}{\centerline{\large\bf Abstract}\vspace{0.5ex}\begin{quote}}{\par\end{quote}\vskip 1ex} 

\begin{abstract}%
    We propose a novel multivariate nonparametric multiple change point detection method using classifiers.
    We construct a classifier log-likelihood ratio that uses class probability predictions to compare different change point configurations.    
    We propose a computationally feasible search method that is particularly well suited for random forests, denoted by \texttt{changeforest}.
    However, the method can be paired with any classifier that yields class probability predictions, which we illustrate by also using a $k$-nearest neighbor classifier.
    We prove that it consistently locates change points in single change point settings when paired with a consistent classifier.
    Our proposed method changeforest achieves improved empirical performance in an extensive simulation study compared to existing multivariate nonparametric change point detection methods.
    An efficient implementation of our method is made available for \texttt{R}, Python, and Rust users in the \texttt{changeforest} software package.
\end{abstract}

\keywords{break point detection, classifiers, multivariate time series, nonparametric}

\section{Introduction}\label{seq:introduction}

Change point detection considers the localization of abrupt distributional changes in ordered observations, often time series.
We focus on offline problems, retrospectively detecting changes after all samples have been observed.
Inferring abrupt structural changes has a wide range of applications, including bioinformatics \citep{olshen2004circular, picard2005statistical}, neuroscience \citep{kaplan2001macrostructural}, biochemistry \citep{hotz2013idealizing}, climatology \citep{reeves2007review}, and finance \citep{kim2005structural}.
A rich literature has developed around statistical methods that recover change points in different scenarios, see \citet{truong2020selective} for a recent review.

Parametric change point detection methods typically assume that observations between change points stem from a finite-dimensional family of distributions.
Change points can then be estimated by maximizing a regularized log-likelihood over various segmentations.
The classical scenario of independent univariate Gaussian variables with constant variance and changes in the mean goes back to the 1950s \citep{page1954continuous, page1955test}.
It has recently been studied by, among others, \citet{frick2014multiscale}, \citet{fryzlewicz2014wild}, and references therein.
\citet{pein2017heterogeneous} consider a relaxed version that allows changes in the variance in addition to the mean shifts. 
Generalizations exist for multivariate scenarios.
Recently, even high-dimensional scenarios have been studied:
\cite{wang2018high} and \cite{enikeeva2019high} consider multivariate Gaussian observations with sparse mean shifts,
\citet{wang2021optimal} study the problem of changing covariance matrices of sub-Gaussian random vectors, \citet{roy2017change} study estimation in Markov random field models, and \citet{londschien2021change} consider time-varying Gaussian graphical models.

Nonparametric change point detection methods use measures that do not rely on parametric forms of the distribution or the nature of change.
Proposals for univariate nonparametric change point detection methods include \citet{pettitt1979non}, \citet{carlstein1988nonparametric}, \citet{dumbgen1991asymptotic}, and, more recently, \citet{zou2014nonparametric} and \citet{padilla2019optimal}.
Multivariate setups are challenging even in parametric scenarios.
The few multivariate nonparametric change point detection methods we are aware of are based on ranks \citep{lung2015homogeneity},
distances \citep{matteson2014nonparametric, chen2015graph, zhang2021graph},
kernel distances \citep{arlot2019kernel, garreau2018consistent, chang2019kernel},
kernel densities \citep{padilla2021optimal},
and relative density ratio estimation \citep{liu2013change}.

Many of these nonparametric proposals are related to hypothesis testing.
For a single change point, they maximize a test statistic that measures the dissimilarity of distributions.
In the modern era of statistics and machine learning, many nonparametric methods are available to learn complex conditional class probability distributions, such as random forests \citep{breiman2001random} and neural networks \citep{mcculloch1943logical}.
These have proven to often outperform simple rank and distance-based methods.
\citet{friedman2004multivariate} proposed to use binary classifiers for two-sample testing.
Recently, \citet{lopez2016revisiting} applied this framework in combination with neural networks and \citet{hediger2022use} with random forests.
Similar to such two-sample testing approaches, we use classifiers to construct a novel class of multivariate nonparametric  multiple change point detection methods.

\subsection{Our contribution}\label{sec:our_contribution}
We propose a novel classifier-based nonparametric change point detection framework.
Motivated by parametric change point detection, we construct a classifier log-likelihood ratio that uses class probability predictions to compare different change point configurations.
Theoretical results for the population case motivate the development of our algorithm.

We present a novel search method based on binary segmentation that requires a constant number of classifier fits to find a single change point.
We prove that this method consistently recovers a single change point when paired with a classifier that provides consistent class probability predictions.
For multiple change points, the number of classifier fits required scales linearly with the number of change points, making the algorithm highly efficient.
Our method is implemented for random forests and $k$-nearest neighbor classifiers in the \texttt{changeforest} software package, available for \texttt{R}, Python, and Rust users.
The algorithm achieves improved empirical performance compared to existing multivariate nonparametric change point detection methods.

\section{A nonparametric classifier log-likelihood ratio}
\label{sec:change_point_detection}
We are interested in detecting structural breaks in the probability distribution of a time series.
More formally, consider a sequence of independent random variables $(X_i)_{i = 1}^n \subset \BR^p$ with distributions $\tilde\BP_1, \dotsc, \tilde\BP_n$. %
Let
\begin{equation*}
\alpha^0 := \{0, n\} \cup \{i : \tilde\BP_i \neq \tilde\BP_{i + 1}\}
\end{equation*}
be the set of \emph{segment boundaries} and denote with $K^0 := |\alpha^0| - 1$ the total number of segments.
We label the elements of $\alpha^0$ by their natural order starting with zero.
Then consecutive elements in $\alpha^0$ define segments $\segkzero$ for $k = 1, \dotsc, K^0$ within which the map $i \mapsto \tilde\BP_i$ is constant and the $X_i \sim \BP_k \coloneqq \tilde\BP_{\alpha_k}$ are i.i.d.
We call $\alpha^0_1, \dotsc, \alpha^0_{K^0 \shortminus 1}$ the \emph{change points} of the sequence $X_1, \dotsc, X_n$.
We aim to estimate the change points (or equivalently $\alpha^0$) of the time series process $X_1, \dotsc, X_n$ upon observing a realization $x_1, \dotsc, x_n$.
We construct a classifier log-likelihood ratio and use it for change point detection.
We motivate this ratio later in Section~\ref{sec:non_parametric_methodology}, drawing parallels to parametric methods that We introduce in Section~\ref{sec:parametric_methodology}.

For any segmentation $\alpha$, let $\by_\alpha = (y_i)_{i=1}^n$ be the sequence such that $y_i = k$ whenever $i \in \segk$.
Consider a classification algorithm $\hat p$ that can produce class probability predictions and denote with $\hat p_\alpha$ the classifier trained on covariates $\bX = (x_i)_{i=1}^n$ and labels $\by_\alpha$.
Write $\hat p_\alpha(x_i)_k$ for the trained classifier's class $k$ probability prediction for observation $x_i$.
The \emph{classifier log-likelihood ratio} for change point detection is
\begin{equation}\label{eq:loglikelihood_non_parametric}
    G\left((x_i)_{i = 1}^n \mid \alpha, \hat p \right) 
    \coloneqq \sum_{k = 1}^K \sum_{i = \alpha_{k \shortminus 1} + 1}^{\alpha_k} \log( \frac{n}{\alpha_k - \alpha_{k \shortminus 1}}\hat p_\alpha(x_i)_k).
\end{equation}

Consider a setup with a single change point.
If a split candidate $s$ is close to that change point, we expect the trained classifier $\hat p_{\{0, s, n\}}$ to perform better than random at separating $x_1, \ldots x_s$ from $x_{s+1}, \ldots x_n$ and $G\left((x_i)_{i=1}^n \mid \{0, s, n\}, \hat p \right)$ will take a positive value.
For a segmentation $\alpha$ with multiple change points, the classifier log-likelihood ratio (\ref{eq:loglikelihood_non_parametric}) measures the separability of the segments $\seg{\alpha^0}{\alpha_1}, \ldots, \seg{\alpha_{K\shortminus 1}}{\alpha_K}$.
We expect this separability to be highest for $\alpha = \alpha^0$, which we formalize in Proposition \ref{lem:optimal_loglikelihood}.

\subsection{The parametric setting}\label{sec:parametric_methodology}
Since first proposals from \citet{page1954continuous}, parametric approaches are typically based on maximizing a parametric log-likelihood, see also \citet{frick2014multiscale} and references therein.
For this, one assumes that the
$X_i \sim \BP_k$ are i.i.d.\ within segments $i \in \segkzero$ and that the distributions
$\BP_k \in \{\BP_\theta \mid \theta \in \Theta\}$
for some finite-dimensional parameter space $\Theta$.
The distribution of
$(X_i)_{i=1}^n$
can then be parametrized with the tuple
$(\alpha^0, (\theta^0_k)_{k=1}^{K^0})$
such that for all
$k = 1, \dotsc, K^0$ and $i = \alpha^0_{k \shortminus 1} + 1, \dotsc, \alpha^0_k$
the $X_i$ are $\BP_{\theta^0_k}$-distributed.

Assuming the $\BP_\theta$ have densities $p_\theta$, we can express the log-likelihood
of observing some sequence $(x_i)_{i = 1}^n$ in a setup parametrized by
$(\alpha, (\theta_k)_{k=1}^{K})$ as
\begin{equation}\label{eq:change_point_loglikelihood}
    \ell\left((x_i)_{i = 1}^n \mid \alpha, (\theta_k)_{k=1}^{K}\right)
    =
    \sum_{k = 1}^{K}
    \sum_{i = \alpha_{k \shortminus 1} + 1} ^ {\alpha_k}
    \log(p_{\theta_k}(x_i)).
\end{equation}
An estimate of $\alpha^0$ can be obtained by maximizing (\ref{eq:change_point_loglikelihood}).
Since the log-likelihood is increasing in $K = |\alpha| - 1$, an additional penalty term $\gamma > 0$ has to be subtracted whenever the number of change points is unknown. Choose
\begin{equation}
    \label{eq:change_point_estimator}
    \hat \alpha_\gamma \in \underset{\alpha\in\CA}{\argmax}\ \underset{\theta_1, \ldots, \theta_K}{\max} \ \ell\left((x_i)_{i = 1}^n \mid \alpha, (\theta_k)_{k=1}^{K}\right) - |\alpha| \gamma \ ,
\end{equation}
where $\CA$ is a set of possible segmentations, typically restricted such that each segment has a minimum length of $\delta n$ for some $\delta>0$.

The most popular parametrization of in-segment distributions are Gaussian distributions with known variance and a changing mean, see for example \citet{yao1988estimating} and \citet{fryzlewicz2014wild}.
Next to changes in the mean, \citet{killick2012optimal} also study changes in the variance, where the mean of the Gaussian is unknown but constant over all segments.
\citet{kim1989likelihood} study changes in the parameters of a linear regression model.
\citet{shen2012change} and \citet{cleynen2014segmentation} consider changes in the rate of a Poisson and changes in the success rate in a negative binomial, with applications in detecting copy number alterations in DNA.
\citet{frick2014multiscale} consider the general case where in-segment distributions belong to an exponential family.

\subsection{The nonparametric setting}
\label{sec:non_parametric_methodology}
Often, a good parametric model is hard to justify and nonparametric methods can be applied instead.
Again, consider a classification algorithm $\hat p$ that yields class probability predictions after training and recall that $\by_\alpha = (y_i)_{i=1}^n$ with $y_i = k$ whenever $i \in \segk$ and that ${\bX = (x_i)_{i=1}^n}$.
For any $\alpha$, training $\hat p$ on $(\bX, \by_\alpha)$ yields the function $\hat p_\alpha \coloneqq \hat p(\bX, \by_\alpha)$ that assigns class $k$ probability estimates $\hat p_\alpha(x)_k$ corresponding to segments $\segk$ to observations $x$.
Write
$\BP_\uv := \frac{1}{v - u}\sum_{i=u+1}^v \tilde \BP_i$
for the mixture distribution of
$X_{u+1}, \ldots, X_v$
and assume that densities $p_\uv$ exist.
Assuming a uniform prior on class probabilities, the trained classifier $\hat p_\alpha$ estimates
\begin{equation*}
    \hat p_\alpha(x)_k \approx \BP(Y = k \mid X = x) = \frac{\dd \BP(X = x \mid Y = k)}{\dd \BP(X = x)}\BP(Y = k)
    = \frac{p_\seg{\alpha_{k \shortminus 1}}{\alpha_k}(x_i)}{p_\seg{0}{n}(x_i)}\frac{\alpha_k - \alpha_{k\shortminus 1}}{n}
\end{equation*}
for $k = 1, \ldots K$. This yields an approximation
\begin{equation*}
    \frac{p_\seg{\alpha_{k \shortminus 1}}{\alpha_k}(x) }{p_\seg{0}{n}(x)} \approx \frac{n}{\alpha_k - \alpha_{k \shortminus 1}} \hat p_\alpha(x)_k \,,
\end{equation*}
which we use in the definition of the classifier log-likelihood ratio (\ref{eq:loglikelihood_non_parametric}):
\begin{align*}
    G\left((x_i)_{i = 1}^n \mid \alpha, \hat p \right) 
    &= \sum_{k = 1}^K \sum_{i = \alpha_{k \shortminus 1} + 1}^{\alpha_k} \log( \frac{n}{\alpha_k - \alpha_{k \shortminus 1}}\hat p_\alpha(x_i)_k)
    \approx \sum_{k = 1}^K \sum_{i = \alpha_{k \shortminus 1} + 1}^{\alpha_k} \log(\frac{p_\segk(x_i)}{p_\seg{0}{n}(x_i)}) \, .
\end{align*}
This resembles the parametric log-likelihood (\ref{eq:change_point_loglikelihood}), where $p_\seg{\alpha^0_{k\shortminus 1}}{\alpha^0_k} = p_{\theta^0_k}$, up to the term
$-\sum_{i=1}^n \linebreak[1] \log(p_\seg{0}{n}(x_i))$, which is independent of $\alpha$.
In Equation~(\ref{eq:loglikelihood_non_parametric}), the classification algorithm $\hat p$ takes the role of the parametrization by $\theta \in \Theta$ in Equation~(\ref{eq:change_point_loglikelihood}).
Analogously to (\ref{eq:change_point_estimator}), a nonparametric estimator for $\alpha^0$ could be given by
\begin{equation}\label{eq:change_point_estimator_nonparametric}
    \hat \alpha_\gamma \in \argmax_{\alpha \in \CA} G\left( (x_i)_{i = 1}^n \mid \alpha, \hat p \right) - \gamma |\alpha|\ ,
\end{equation}
where $\CA$ is a suitable set of possible segmentations.

\begin{remark}
    To further motivate the classifier log-likelihood ratio (\ref{eq:loglikelihood_non_parametric}), take a generative classifier $\hat p$ (as in linear discriminant analysis).
    Choose $\hat \theta(\uv) \in \argmax_\theta \sum_{i=u+1}^v \log(p_\theta(x_i))$ and for any segmentation $\alpha$ let $\pi_k \coloneqq \frac{\alpha_{k\shortminus 1} - \alpha_k}{n}$, such that
    $\hat p_\alpha(x_i)_k =
    \frac{
        \pi_k \ p_{\hat\theta(\segk)}(x_i) 
    }{
        \sum_{j=1}^K \pi_j \ p_{\hat\theta(\segj)}(x_i)
    }$.
    Then the classifier log-likelihood ratio is equal to the parametric log-likelihood ratio between the segment distributions 
    $p_{\hat \theta(\segk)}$ and the mixture model $\sum_{j=1}^K  \pi_j \ p_{\hat \theta(\segj)}(\cdot)_j$:
    \begin{align*}
        G\left( (x_i)_{i = 1}^n \mid \alpha, \hat p \right)
        &= \sum_{k = 1}^K \sum_{i = \alpha_{k \shortminus 1} + 1}^{\alpha_k}
        \log( \frac{
            p_{\hat\theta(\segk)}(x_i) 
        }{
            \sum_{j=1}^K \pi_j p_{\hat\theta(\segj)}(x_i)
        })\\
        &=\ell\left((x_i)_{i = 1}^n \mid (\alpha, (\hat\theta(\segk))_{k=1}^{K})\right) - \sum_{n=1}^n \log( \sum_{j=1}^K \pi_j \ p_{\hat\theta(\segj)}(x_i) ) \ . \\
    \end{align*}
\end{remark}%
The following proposition suggests that for any estimator approximating the Bayes classifier, maximizing $G$ over segmentations $\alpha$ yields a reasonable estimator for $\alpha^0$.
\begin{restatable}{proposition_}{lemmaoptimalloglikelihood}
\label{lem:optimal_loglikelihood}
    Let \begin{equation*}
        p^\infty_\alpha(x) \coloneqq
        \BP\left(Y = k \mid X = x\right) =
        \left(
        \frac{\alpha_k - \alpha_{k \shortminus 1}}{n}
        \frac{p_\seg{\alpha_{k \shortminus 1}}{\alpha_k}}{p_\seg{0}{n}}
        \right)_{k = 1}^{|\alpha| - 1}
    \end{equation*}
    be the infinite sample Bayes classifier corresponding to the segmentation $\alpha$.
    If $\CA$ is a set of segmentations containing $\alpha^0$ and $\DKL{P}{Q}$ is the Kullback-Leibler divergence of $Q$ with respect to $P$, then
    \begin{align*}
    \max_{\alpha\in\CA} \,
    \BE\left[ G\left( (X_i)_{i = 1}^n \mid \alpha, p^\infty \right)\right] &= 
    \BE\left[ G\left( (X_i)_{i = 1}^n \mid \alpha^0, p^\infty \right)\right]\nonumber \\
    &= \sum_{k = 1}^K (\alpha^0_k - \alpha^0_{k \shortminus 1}) \DKL{\BP_k}{\BP_\seg{0}{n}}
    \end{align*}
    and any maximizer of $\BE\left[ G\left( (X_i)_{i = 1}^n \mid \alpha, p^\infty \right)\right]$ is equal to $\alpha^0$ or is a segmentation containing $\alpha^0$.
    In particular, the maximizer $\alpha$ with the smallest cardinality is equal to $\alpha^0$.
\end{restatable}
\section{The \texttt{changeforest} algorithm}\label{sec:implementation}

Finding a solution to the nonparametric estimator (\ref{eq:change_point_estimator_nonparametric}) by evaluating the classifier log-likelihood ratio~(\ref{eq:loglikelihood_non_parametric}) for all possible candidate segmentations $\alpha \in \CA$ is infeasible.
We present an efficient search procedure based on binary segmentation that finds an approximate solution to (\ref{eq:change_point_estimator_nonparametric}).
We furthermore motivate the use of random forests as classifiers and present a model selection procedure adapted to our search procedure.
These building blocks combine to form the \texttt{changeforest} algorithm, which we summarize in Section~\ref{sec:our_proposals}.
Finally, we present a consistency result for our method in the presence of a single change point.

\subsection{Binary segmentation}

Binary segmentation \citep{vostrikova1981detecting} is a popular greedy algorithm to obtain an approximate solution to the parametric maximum log-likelihood estimator (\ref{eq:change_point_estimator}).
For some $\hat \theta(\uv) \in \argmax_\theta \linebreak[1] \sum_{i=u+1}^v \log(p_\theta(x_i))$, binary segmentation recursively splits segments at the split~$s$ maximizing the gain
\begin{equation}\label{eq:parametric_gain}
    G_\uv(s) := \sum_{i = u+1}^s \log(\frac{p_{\hat\theta(\us)}(x_i)}{p_{\hat\theta(\uv)}(x_i)}) + \sum_{i=s+1}^v   \log(\frac{p_{\hat\theta(\sv)}(x_i)}{p_{\hat\theta(\uv)}(x_i)}) \, ,
\end{equation}
the increase in log-likelihood, until a stopping criterion is met.
For change in mean, where $\hat \theta(\uv) = \frac{1}{v - u}\sum_{i=u+1}^v x_i$ and $p_\theta(x) = \frac{1}{\sqrt{2\pi}}\exp(-\frac{1}{2}(x - \theta)^2)$,
the normalized gain $\frac{2}{v - u} G_\uv(s)$
is equal to
$\left(\sqrt{\frac{v - s}{s - u}} \sum_{i=u+1}^s x_i - \sqrt{\frac{s - u}{v - s}}\sum_{i=s+1}^v x_i \right)^2$, the square of the CUSUM statistic, first presented by \citet{page1954continuous}.
Binary segmentation typically requires $\CO(K n \log(n))$ evaluations of the gain, where $K$ is the number of change points, and is typically faster than search methods based on dynamic programming such as PELT \citep{killick2012optimal}.

We replace the parametric log-likelihood ratio $G_\uv(s)$ in binary segmentation with the nonparametric classifier log-likelihood ratio from (\ref{eq:loglikelihood_non_parametric})
\begin{align}\label{eq:nonparametric_gain}
    G\left( (x_i)_{i=u+1}^s \mid \{u, s, v\}, \hat p \right) &
    = \sum_{i=u+1}^s \log( \frac{v-u}{s-u} \hat p_{\{u, s, v\}}(x_i)_1 ) + 
    \sum_{i=s+1}^v \log( \frac{v-u}{v-s} \hat p_{\{u, s, v\}}(x_i)_2 ) \\
    &\approx \sum_{i=u+1}^s \log( \frac{p_{\us}(x_i)}{p_{\uv}(x_i)}) +
    \sum_{i=s+1}^v \log( \frac{p_{\sv}(x_i)}{p_{\uv}(x_i)}) \, . \nonumber
\end{align}

In many parametric settings, the expected gain curve $G_\uv$ will be piecewise convex between the underlying change points \citep{kovacs2020optimistic}.
Proposition \ref{prop:convexity} shows that the same holds for the nonparametric variant (\ref{eq:nonparametric_gain}) in the population case.
Figure~\ref{fig:gain_curves} shows the nonparametric gains for a random forest and a $k$-nearest neighbor classifier applied to simulated data with two change points.
\begin{restatable}{proposition_}{lemmaconvexity}\label{prop:convexity}
For the Bayes classifier $p^\infty$, the expected classifier log-likelihood ratio
\begin{equation*}
    s \mapsto \BE \left[ G \left( (X_i)_{i = u+1}^v \mid \{u, s, v\}, p^\infty \right) \right]
\end{equation*}
is piecewise convex between the underlying change points $\alpha^0$,
with strict convexity in the segment $\seg{\alpha^0_{k \shortminus 1}}{\alpha^0_k}$ if
$\BP_\seg{u}{\alpha^0_{k \shortminus 1}} \neq \BP_\seg{\alpha^0_{k\shortminus 1}}{\alpha^0_k}$ or
$\BP_\seg{\alpha^0_{k\shortminus 1}}{\alpha^0_k} \neq \BP_\seg{\alpha^0_k}{v}$.
\end{restatable}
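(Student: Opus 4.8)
The plan is to first collapse the expected gain into a clean sum of Kullback--Leibler divergences, and then to read off convexity from the joint convexity of the perspective function $a \mapsto a\log(a/b)$, whose degeneracy directions will pin down the strictness condition exactly.

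\textbf{Step 1 (reduction to KL divergences).} I would substitute the Bayes classifier $p^\infty$ into (\ref{eq:nonparametric_gain}); the prior weights cancel, leaving $G = \sum_{i=u+1}^s \log(p_{\us}(X_i)/p_{\uv}(X_i)) + \sum_{i=s+1}^v \log(p_{\sv}(X_i)/p_{\uv}(X_i))$. Taking expectations with $X_i \sim \tilde\BP_i$ and using $\sum_{i=u+1}^s \tilde p_i = (s-u)\,p_{\us}$ (and the mirror identity on $\sv$) gives
\[
\BE\!\left[G\left((X_i)_{i=u+1}^v \mid \{u,s,v\}, p^\infty\right)\right] = (s-u)\,\DKL{\BP_{\us}}{\BP_{\uv}} + (v-s)\,\DKL{\BP_{\sv}}{\BP_{\uv}} =: f(s).
\]
The structural point I would exploit throughout is that $\BP_{\uv}$ does not depend on $s$.

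\textbf{Step 2 (discard affine parts).} Writing $S_{\us} := (s-u)p_{\us} = \sum_{i=u+1}^s \tilde p_i$, I expand each divergence as $\int S_{\us}\log p_{\us} - \int S_{\us}\log p_{\uv}$. Inside a fixed segment $\segkzero$ every incremented observation has density $p_k$, so $\int S_{\us}\log p_{\uv} = \sum_{i=u+1}^s \int \tilde p_i \log p_{\uv}$ and its mirror are \emph{affine} in $s$ and cannot influence convexity. It therefore suffices to prove that $h(s) := \int S_{\us}\log(S_{\us}/(s-u))\,dx$ and the symmetric $\tilde h(s) := \int S_{\sv}\log(S_{\sv}/(v-s))\,dx$ are convex on each segment.

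\textbf{Step 3 (perspective-function convexity).} I would recognize $h(s) = \int F(S_{\us}(x),\, s-u)\,dx$ for the perspective function $F(a,b) := a\log(a/b)$, which is jointly convex on $(0,\infty)^2$. As $s$ runs over the interior of a single segment $\segkzero$, each step moves one observation of density $p_k$, so the point $(S_{\us}(x),\, s-u)$ travels along a straight line with direction $(p_k(x),1)$; equivalently, three consecutive split points yield three collinear points in $(a,b)$-space. Convexity of $F$ along this line gives $h(s{+}1)-2h(s)+h(s{-}1)\ge 0$ pointwise in $x$, hence after integration, and likewise for $\tilde h$. Summing the two (plus the affine remainder) yields piecewise convexity of $f$; the restriction to within-segment splits is precisely what produces the ``between the change points'' qualifier, since at a true change point the density of the moved observation jumps and the collinearity breaks.

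\textbf{Step 4 (strictness) and the main obstacle.} Because $F$ is positively $1$-homogeneous, it is affine only along rays through the origin, so the inequality in Step 3 is strict unless the line $\{(S_{\us}(x),s-u)\}$ passes through the origin for a.e.\ $x$, i.e.\ $p_{\us}=p_k$ a.e., i.e.\ $\BP_{\us}=\BP_k$. For $s$ interior to $\segkzero$ this equality holds iff $\BP_{\seg{u}{\alpha^0_{k \shortminus 1}}}=\BP_k$; hence $\BP_{\seg{u}{\alpha^0_{k \shortminus 1}}}\neq\BP_k$ makes $h$ strictly convex, and symmetrically $\BP_{\seg{\alpha^0_k}{v}}\neq\BP_k$ makes $\tilde h$ strictly convex, either of which forces strict convexity of $f$ --- exactly the stated ``or'' condition. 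The substantive step is Step~3/4: spotting that the correct object is the $1$-homogeneous perspective function, which simultaneously delivers convexity (via joint convexity / collinearity) and the precise degeneracy set (via its homogeneity directions). The remaining care is bookkeeping: verifying that the KL cross-terms are affine only \emph{inside} a segment, and tracking which segment each incremented observation belongs to.
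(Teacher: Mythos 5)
Your proof is correct, but it takes a genuinely different route from the paper's. The paper works with first differences: it expands $G(s+1)-G(s)$ and $G(s-1)-G(s)$ directly, recognizes in each a sum of nonnegative Kullback--Leibler terms such as $(s-u+1)\DKL{\BP_\seg{u}{s+1}}{\BP_\us}$ plus a residual single-observation term, and then uses that $X_s$ and $X_{s+1}$ are i.i.d.\ within a true segment to cancel the two residuals, so the second difference is a sum of KL divergences; strictness is read off from which of these divergences is forced to be positive. You instead rewrite the expected gain as $\int F(S_\us,\,s-u)\,dx+\int F(S_\sv,\,v-s)\,dx+(\text{affine in }s)$ with $F(a,b)=a\log(a/b)$ the perspective of $t\mapsto t\log t$, observe that within a segment three consecutive splits give collinear points with equal spacing in $(a,b)$-space, and invoke joint convexity of $F$; the $1$-homogeneity of $F$ then characterizes the flat directions as radial, which delivers exactly the stated strictness condition $\BP_\us\neq\BP_k$ or $\BP_\sv\neq\BP_k$. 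The two arguments are of course cousins (nonnegativity of KL is itself convexity of $t\log t$), and your collinearity observation plays precisely the role of the paper's i.i.d.\ cancellation of the cross-terms. What your version buys is a cleaner conceptual picture and a strictness characterization that falls out of the geometry of the perspective function rather than from case-checking which KL terms survive; what the paper's version buys is an explicit formula for the increments $G(s\pm1)-G(s)$ in terms of KL divergences, which is informative about the slope of the gain curve and is reused in spirit in the proof of Proposition~\ref{lem:piece_wise_linear}. One point to make explicit if you write this up: the passage from pointwise strictness of the second difference on $\{x: p_\us(x)\neq p_k(x)\}$ to strictness of the integrated second difference needs that set to have positive measure, i.e.\ $\BP_\us\neq\BP_k$ as distributions, and the translation of this into the condition on $\BP_\seg{u}{\alpha^0_{k\shortminus1}}$ implicitly assumes $u<\alpha^0_{k\shortminus1}$ (the same convention the paper uses).
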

\begin{figure}[h]
    \centering
    \includegraphics[width = \textwidth]{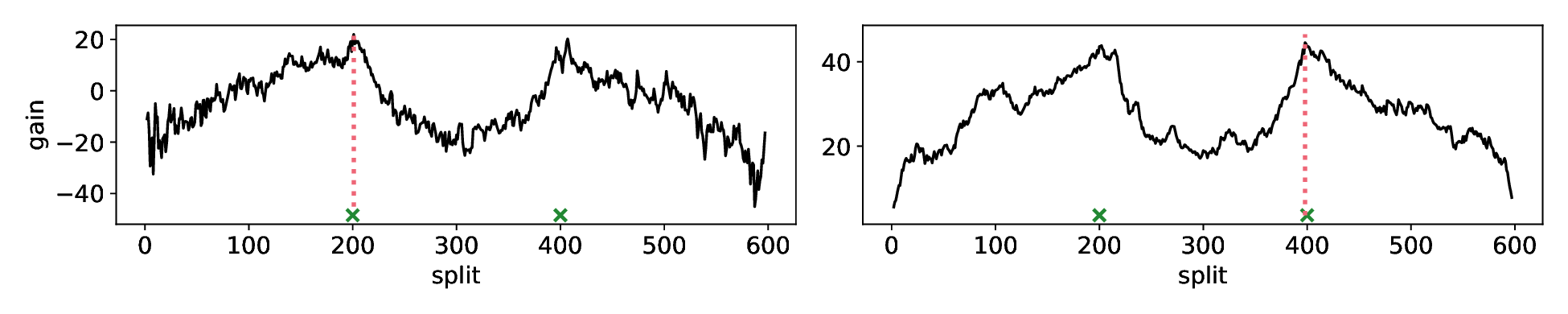}
    \caption{
        \label{fig:gain_curves}
        Classifier-based gain curves using a random forest (left) and a $k$-nearest neighbor classifier (right).
        The time series with 600 observations of dimension 5 and change points at $t=200, 400$ (green crosses) were simulated with changing covariances, as described in Section~\ref{sec:simulations_setups}.
        The maximizer is marked in red.
    }
\end{figure}

\subsection{The two-step search algorithm}\label{sec:two-step-search}
Binary segmentation relies on the full grid search, where $G_\uv(s)$ is evaluated for all ${s=u+1, \ldots, v}$, to find the maximizer of the gain $G_\uv(s)$.
In many traditional parametric settings, such as change in mean, log-likelihoods of neighboring segments can be recovered using cheap $\CO(1)$ updates.
This enables change point detection with binary segmentation in $\CO(K n \log(n))$ time, where $K$ is the number of change points.

For many classifiers, random forests in particular, such updates are unavailable and the classifiers need to be recomputed from scratch, making the computational cost of the grid search in binary segmentation prohibitive.
Similar computational issues arise in high-dimensional regression models.
Here, \citet{kaul2019efficient} propose a two-step approach.
Starting with an initial guess $s^{(0)}$, they fit a single high-dimensional linear regression for each of the segments $\seg{u}{s^{(0)}}$ and $\seg{s^{(0)}}{v}$ and
generate an improved estimate of the optimal split using the resulting residuals.
They repeat the procedure two times and show a result about consistency in the high-dimensional regression setting with a single change point if the change point is sufficiently close to the first guess $s^{(0)}$.

We apply a variant of the two-step search paired with the classifier log-likelihood ratio for multiple change point scenarios.
Instead of residuals, we recycle the class probability predictions and the resulting classifier log-likelihood ratios for individual observations from a single classifier fit $\hat p_{\{u, s^{(0)}, v\}}$, approximating
\begin{multline*}
    G\left((x_i)_{i=u+1}^v \mid \{ u, s, v\}, \hat p\right)
    \approx
    \sum_{i=u+1}^s \log(\frac{v - u}{s^{(0)} - u} \hat p_{\{u, s^{(0)}, v\}}(x_i)_1) +\\
    \sum_{i=s+1}^v \log(\frac{v - u}{v - s^{(0)}} \hat p_{\{u, s^{(0)}, v\}}(x_i)_2).
\end{multline*}
We call this the \emph{approximate gain} in the following.
Note that the classifier and normalization factors are fixed, but the summation varies with the split $s$.
Like \citet{kaul2019efficient}, we compute this twice, using the first maximizer of the approximate gain as a second guess $s^{(1)}$.
This allows us to find local maxima of the nonparametric gain (\ref{eq:nonparametric_gain}) in a constant number of classifier fits.

\citet{kaul2019efficient} only apply the two-step search in scenarios with a single change point.
In scenarios with multiple change points, as encountered in binary segmentation, the class probability predictions from the initial classifier fit might contain little information if $\BP_\seg{u}{s^{(0)}} \approx \BP_\seg{s^{(0)}}{v}$.
To avoid this, we start with multiple initial guesses and select the split point corresponding to the overall highest approximate gain as the second guess.
The resulting two-step algorithm, as implemented in \texttt{changeforest}, using three initial guesses at the segment's 25\%, 50\%, and 75\%-quantiles is presented in Algorithm~\ref{alg:two_step_search}.
Proposition \ref{lem:piece_wise_linear} suggests good estimation performance when coupling the two-step search with our classifier-based approach in single change point scenarios.
We present a consistency result of the two-step search when paired with a classifier that yields consistent class probability predictions for single change point scenarios in Section~\ref{sec:theoretical_guarantees}.

\begin{restatable}{proposition_}{lemmapiecewiselinear}
\label{lem:piece_wise_linear}
For the Bayes classifier $p^\infty$ and any initial guess $u < s^{(0)} < v$, the expected approximate gain
\begin{equation*}
    s \mapsto \BE\left[\sum_{i = u + 1}^{s} \log( \frac{v - u}{s^{(0)} - u} p_{\{u, s^{(0)}, v\}}^\infty(X_i)_1) + \sum_{i = s + 1}^v \log(\frac{v - u}{v - s^{(0)}} p_{\{u, s^{(0)}, v\}}^\infty(X_i)_2)\right]
\end{equation*}
is piecewise linear between the underlying change points $\alpha^0$.
If there is a single change point $a^0 \in \uv$, the expected approximate gain has a unique maximum at $s=a^0$.
\end{restatable}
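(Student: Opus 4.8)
The plan is to simplify the expected approximate gain into a telescoping sum whose per-step increments depend on the data only through the in-segment distributions, read off piecewise linearity, and then, in the single change point case, identify the two slopes around $a^0$ as signed Kullback--Leibler divergences whose signs are forced by convexity.

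First I would substitute the Bayes classifier into the approximate gain. Restricting the definition in Proposition~\ref{lem:optimal_loglikelihood} to the segment $\uv$ gives $\frac{v-u}{s^{(0)}-u} p^\infty_{\{u,s^{(0)},v\}}(x)_1 = p_\seg{u}{s^{(0)}}(x)/p_\uv(x)$ and $\frac{v-u}{v-s^{(0)}} p^\infty_{\{u,s^{(0)},v\}}(x)_2 = p_\seg{s^{(0)}}{v}(x)/p_\uv(x)$, so the normalization factors cancel and the approximate gain, which I denote $A(s)$, becomes
\[
A(s) = \sum_{i=u+1}^{s} \log(\frac{p_\seg{u}{s^{(0)}}(X_i)}{p_\uv(X_i)}) + \sum_{i=s+1}^{v} \log(\frac{p_\seg{s^{(0)}}{v}(X_i)}{p_\uv(X_i)}).
\]
Taking expectations and forming the forward increment, $p_\uv$ cancels once more and $\BE[A(s+1)] - \BE[A(s)] = \BE_{\tilde\BP_{s+1}}[\log(p_\seg{u}{s^{(0)}}(X)/p_\seg{s^{(0)}}{v}(X))]$, which depends on $s+1$ only through the distribution $\tilde\BP_{s+1}$. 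Since $i \mapsto \tilde\BP_i$ is constant on each segment $\segkzero$, these increments are constant between consecutive change points, so $s \mapsto \BE[A(s)]$ is piecewise linear with breakpoints only at $\alpha^0$. This is the first claim.

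For the single change point case, write $\BP_L \coloneqq \BP_\seg{u}{a^0}$, $\BP_R \coloneqq \BP_\seg{a^0}{v}$, and let $\Delta_L, \Delta_R$ be the increment when the added index lies in the left, respectively right, segment; it suffices to show $\Delta_L > 0 > \Delta_R$, since then $\BE[A(s)]$ strictly increases up to $a^0$ and strictly decreases afterwards. Assuming $s^{(0)} \le a^0$ (the other case is symmetric), the segment $\seg{u}{s^{(0)}}$ is pure, so $\BP_\seg{u}{s^{(0)}} = \BP_L$, while $\BP_\seg{s^{(0)}}{v} = \lambda \BP_L + (1-\lambda)\BP_R$ with $\lambda = (a^0 - s^{(0)})/(v - s^{(0)}) \in [0,1)$. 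Then $\Delta_L = \DKL{\BP_L}{\lambda\BP_L + (1-\lambda)\BP_R} > 0$, because $\lambda < 1$ and $\BP_L \neq \BP_R$ force the mixture to differ from $\BP_L$; and adding and subtracting $\BE_{\BP_R}[\log p_R]$ gives $\Delta_R = \DKL{\BP_R}{\lambda\BP_L + (1-\lambda)\BP_R} - \DKL{\BP_R}{\BP_L}$.

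The main obstacle is the sign of $\Delta_R$, which I would settle using convexity of $Q \mapsto \DKL{\BP_R}{Q}$. Since the mixture lies on the segment between $\BP_L$ and $\BP_R$ and $\DKL{\BP_R}{\BP_R} = 0$, convexity yields $\DKL{\BP_R}{\lambda\BP_L + (1-\lambda)\BP_R} \le \lambda\DKL{\BP_R}{\BP_L} < \DKL{\BP_R}{\BP_L}$, the last inequality strict because $\lambda < 1$ and $\DKL{\BP_R}{\BP_L} > 0$; hence $\Delta_R < 0$. The case $s^{(0)} > a^0$ is identical after exchanging the roles of the two mixtures, and the unique maximum at $s = a^0$ follows. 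The only remaining care is to ensure the log-ratios are integrable and well defined, which I would handle via the standing assumption that the segment densities exist and are mutually absolutely continuous on $\uv$.
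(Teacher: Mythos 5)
Your proof is correct, and its skeleton matches the paper's: the same telescoping decomposition of the gain into per-index increments $U_j=\log\bigl(p_\seg{u}{s^{(0)}}(X_j)/p_\seg{s^{(0)}}{v}(X_j)\bigr)$, whose expectations are constant within the segments of $\alpha^0$, immediately gives piecewise linearity, and the single-change-point claim reduces in both treatments to showing the slope is strictly positive on one side of $a^0$ and strictly negative on the other. Where you genuinely diverge is in how the sign of the ``mixture-side'' slope is established. In both proofs one slope is directly a Kullback--Leibler divergence (in your convention $s^{(0)}\le a^0$, it is $\Delta_L=\DKL{\BP_L}{\lambda\BP_L+(1-\lambda)\BP_R}>0$). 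For the other slope, the paper (working in the symmetric case $s^{(0)}\ge a^0$) expands the pure law as a \emph{signed} combination of the mixture and the other pure law, e.g.\ $\BP_\seg{u}{a^0}=\tfrac{s^{(0)}-u}{a^0-u}\BP_\seg{u}{s^{(0)}}-\tfrac{s^{(0)}-a^0}{a^0-u}\BP_\seg{a^0}{v}$, which turns that slope into a positive linear combination of two divergences; you instead write it as $\Delta_R=\DKL{\BP_R}{\lambda\BP_L+(1-\lambda)\BP_R}-\DKL{\BP_R}{\BP_L}$ and conclude from convexity of $Q\mapsto\DKL{P}{Q}$ that $\DKL{\BP_R}{\lambda\BP_L+(1-\lambda)\BP_R}\le\lambda\DKL{\BP_R}{\BP_L}<\DKL{\BP_R}{\BP_L}$. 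Both arguments are valid; the paper's identity yields the slope as an explicit nonnegative expression (handy for quantifying the size of the kink at $a^0$), while your convexity route is arguably cleaner and avoids manipulating signed measures. The one point to watch is that your difference-of-divergences step is only a legitimate subtraction when $\DKL{\BP_R}{\BP_L}<\infty$; your mutual-absolute-continuity assumption covers this, and in the degenerate case the conclusion $\Delta_R<0$ still holds since $\DKL{\BP_R}{\lambda\BP_L+(1-\lambda)\BP_R}\le\log\frac{1}{1-\lambda}$ is finite while the subtracted term diverges.
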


Figure~\ref{fig:two_step_search} shows approximate gain curves and the underlying class probability predictions for a data set simulated with changing covariances, as described in Section~\ref{sec:simulations_setups}.
The approximate gain curves have a piecewise linear shape with kinks at the underlying change points, as predicted by Proposition~\ref{lem:piece_wise_linear}.
One can also observe how the distributions of the probability predictions change at the true change points.
For the second row, the choice of the initial guess $s^{(0)} = 300$ is precisely such that $\BP_\seg{0}{300} = \BP_\seg{300}{600}$, and the approximate gain curve is flat.
Figure~\ref{fig:binary_segmentation} shows approximate gain curves in different stages of binary segmentation for a different simulated time series.
Again, the approximate gain curves are roughly piecewise linear with kinks at the underlying change points.

\begin{figure}[h]
    \centering
    \includegraphics[width = \textwidth]{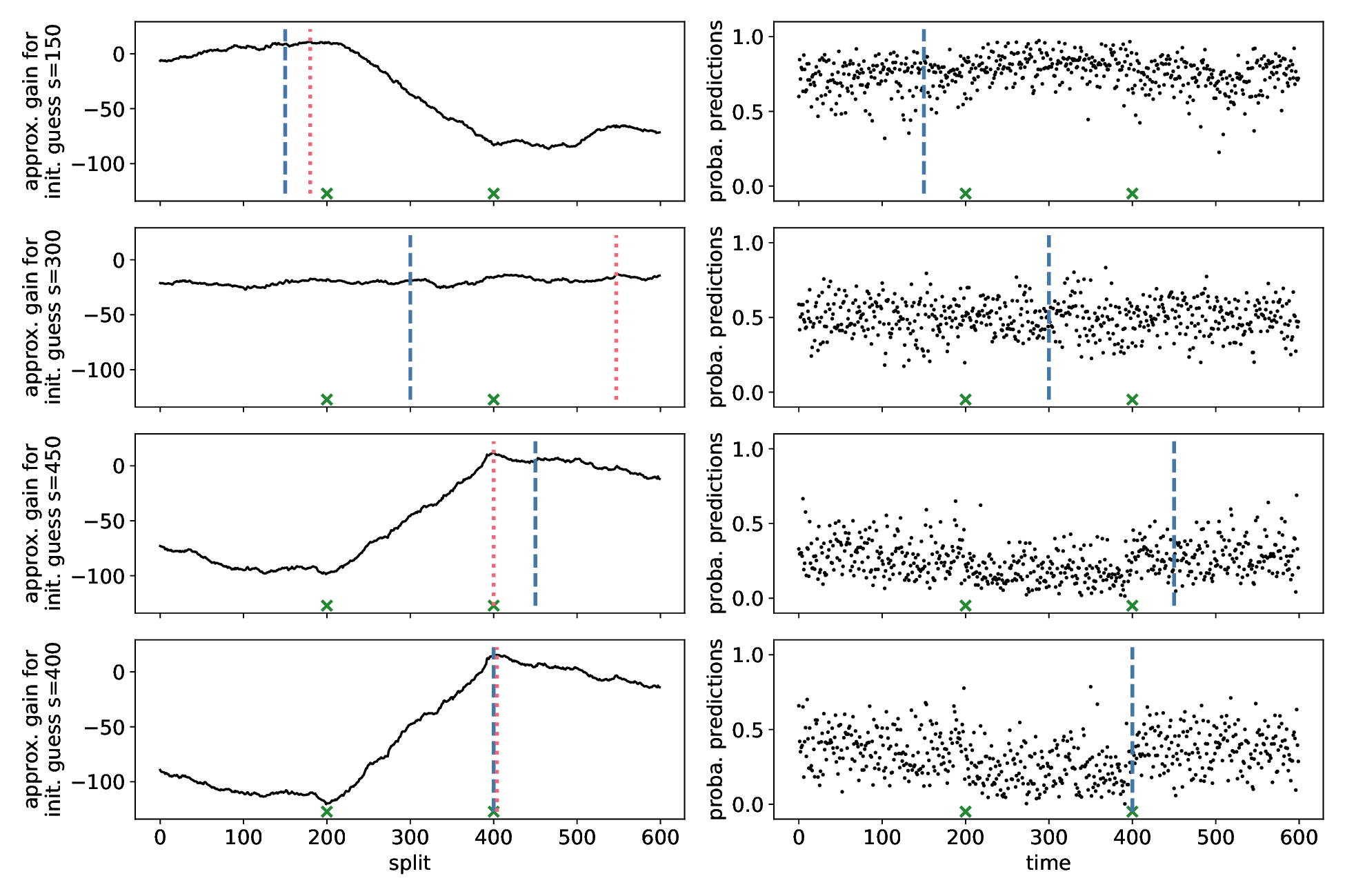}
    \vspace{-0.5cm}
    \caption{
        \label{fig:two_step_search}
        Approximate gain curves (left) from single random forest fits for initial guesses $s^{(0)} = 150$ (top), $300$ (second row), and $450$ (third row) and the underlying out-of-bag class probability predictions (right).
        The last row shows the approximate gain curve for the second guess $\hat s^{(1)} = 400$.
        The underlying data is the same as in Figure~\ref{fig:gain_curves}, with true change points at $t=200, 400$ (green crosses).
        The initial guess and the curve's maximizer are marked by blue and red vertical dashed lines, respectively.
    }
\end{figure}

\subsection{Choice of classifier}\label{sec:choice_of_classifier}
We recommend pairing our classifier-based method with random forests \citep{breiman2001random}. We motivate this choice.

A classifier should satisfy three properties when paired with our change point detection methodology to enjoy good estimation performance and computational feasibility:
(i) It needs to have a low dependence on hyperparameter tuning, (ii) it needs to be able to generate unbiased class probability predictions, and (iii) the training cost should be reasonable even for a large number of samples.
(i) The hyperparameters optimal for classification can vary between segments encountered in binary segmentation.
Retuning them, as is done by \citet{londschien2021change}, is infeasible.
Thus, the classifier needs to provide suitable class probability predictions even for suboptimal hyperparameters.
(ii) If the class probability predictions have a strong bias due to overfitting, the approximate gain curve will have a kink at the initial guess, see Figure~\ref{fig:two_step_search_biased}.
For a weak underlying signal, the initial guess might even be the maximizer of the curve, causing the two-step algorithm to fail in finding a true change point.
(iii) As for computational cost: Our optimization algorithm based on binary segmentation and the two-step search massively reduces the number of classifier fits required for change point estimation.
Still, hundreds of fits might be necessary.
For this not to become restrictive, the classifier's training cost should not explode when the number of samples gets large.

Random forests are a perfect fit: They are known to be an off-the-shelf classifier that fits most data without much hyperparameter tuning, they can generate unbiased probability estimates through out-of-bag prediction, and their training complexity is nearly linear in the sample size for a fixed maximal depth.

We highlight another attractive attribute of random forests: They are applicable to a wide range of types of data.
The need for nonparametric change point detection arises primarily when little is known about the data to be analyzed.
Else, a smartly chosen parametric method might be better suited.
In such situations, classifiers with broad applicability are optimal.
As a tree-based method, random forests are invariant to feature rescaling and robust to the inclusion of irrelevant features.
Due to these properties, random forests are popular as black box or off-the-shelf learning algorithms and are thus well suited for many data sets encountered in nonparametric change point detection.

While random forests are our classifier of choice, our method, including the two-step search, can be paired with any classification algorithm that can produce unbiased class probability predictions.
As an illustration, we pair our approach with the Euclidean norm $k$-nearest neighbor classifier.
It has a single hyperparameter $k$, which we set to the square root of the segment length.
Furthermore, the algorithm can be adapted to recover leave-one-out cross-validated probability estimates.
Our implementation has a computational complexity of $\CO(n^2)$ to compute all pairwise distances once, which we can recycle for all segments.
More efficient implementations exist \citep{arya1998optimal}.
However, the quadratic runtime and memory requirements are not an issue for the reasonably sized data sets we use for benchmarking.

Other choices of classifiers, such as neural networks, would also be possible when paired with a cross-validation procedure to obtain unbiased class probability predictions.
However, we do not consider them here due to their high computational cost.

\begin{figure}[h]
    \centering
    \includegraphics[width = \textwidth]{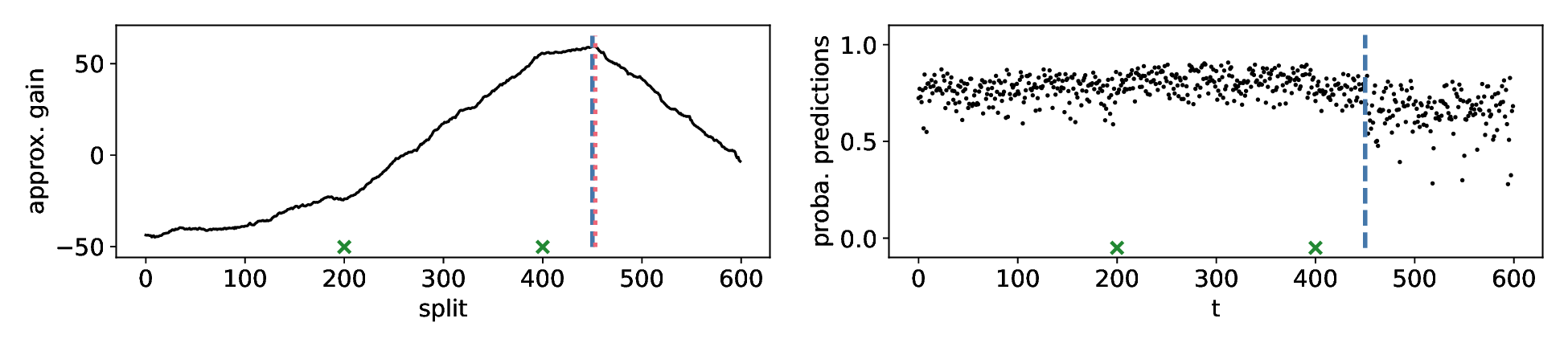}
    \vspace{-0.5cm}
    \caption{
        \label{fig:two_step_search_biased}
        In-sample (not out-of-bag) class probability predictions of a random forest (right) and the corresponding approximate gain curve (left).
        Predictions are from a single random forest fit for the initial guess $s^{(0)} = 450$ (blue line).
        The underlying data is the same as in Figures \ref{fig:gain_curves} and \ref{fig:two_step_search}, with true change points at $t=200, 400$ (green crosses).
        The maximizer is marked in red.
        The predictions show a clear bias, dominating the underlying signal (kink around the true change point at $t=400$), and the approximate gain curve has a peak at the initial guess.
        Compare this to the third row of Figure~\ref{fig:two_step_search}, where out-of-bag class probability predictions were used, resulting in a maximum around a true change point at $t=400$.
    }
\end{figure}

\subsection{Model selection}\label{sec:model_selection}
So far, we have discussed how to accurately and efficiently estimate the location of change points in a signal.
We now discuss how to decide whether to keep or discard a candidate split.

Separating true change points from false ones is crucial for good estimation performance and is a task as complex as finding candidate splits.
Two approaches for model selection are particularly popular in change point detection: thresholding and permutation tests.
For thresholding, a candidate split is kept if its inclusion results in an increase of the log-likelihood at least as large as the threshold $\gamma$.
Thresholding is often applied in parametric settings, where BIC-type criteria for the minimal gain to split can be derived with asymptotic theory, see for example \citet{yao1988estimating}.
Permutation tests are an alternative to thresholding when such asymptotic theory is unavailable.
Here, a statistic gathered for a candidate split is compared to values of the statistic gathered after the underlying data was permuted, see for example \citet{matteson2014nonparametric}.

No asymptotic theory is available to suggest a threshold for the (approximate) nonparametric classifier log-likelihood ratio presented in Section~\ref{sec:change_point_detection}.
Meanwhile, permutation tests require a high number of classifier fits in each splitting step, severely affecting the overall computational cost.
We apply a leaner pseudo-permutation test.
Instead of permuting the observations and refitting the classifier, we only permute the classifier’s predictions gathered in the first step of the two-step search, and thus the classifier log-likelihood ratios.
We then compute the approximate gains after shuffling and take their maxima.
In the absence of change points, there is no underlying signal for the classifier to learn other than the relative class sizes.
We thus expect the distribution of the classifier log-likelihood ratios to be approximately invariant under permutations.
This allows us to compare the maximal gain obtained in the first step of the two-step search to the maximal gains computed after permuting.
Algorithm~\ref{alg:model_selection} summarizes the procedure.

The significance threshold for the pseudo-permutation test is a tuning parameter rather than a valid significance level. We choose 0.02 by default.
This will typically lead to false positives due to multiple testing, as the permutation test is applied to each segment encountered in binary segmentation.
We use 199 permutations.
We analyze the empirical false positive rate of the pseudo-permutation test in Section~\ref{sec:false_positive_rates}.

\subsection{Our proposal}\label{sec:our_proposals} 
We present \texttt{changeforest} in Algorithm~\ref{alg:changeforest}.

We comment on two details: (i) We use $\log_\eta(x)$ instead of the natural logarithm and (ii) rescale probabilities with $1 \, / \, \pi_i^{\{u, s, v\}} = \frac{v - u - 1}{s - u - \indicator_{\{i \leq s\}}}$ instead of $\frac{v-u}{s-u}$.
(i) We introduce $\log_\eta(x)\coloneqq \log((1 - \eta)x + \eta)$ to use instead of $\log(x)$ in the definition of the classifier log-likelihood ratio.
Class probability predictions can take values 0 and 1, prohibiting us from using the logarithm directly to compute individual classifier log-likelihood ratios.
With $\eta = \exp(-6)$, we effectively cap individual classifier log-likelihood ratios from below by $-6$.
(ii) Concerning the rescaling: a random forest's out-of-bag probability predictions behave like those from leave-one-out cross-validation.
The prediction for observation $x_i$ was generated by a forest that was trained on the $v - u - 1$ observations $(x_j)_{j \in \{u+1, \ldots, v\} \backslash \{i\}}$.
If $i \leq s$, then $s - u - 1$ of these observations belong to the first class, and if $i >s$, $s - u$ do.
Thus, in the absence of any change points, we expect $\hat p_{\{u, s, v\}}(x_i) \approx \frac{s - u - \indicator_{\{i \leq s\}}}{v - u - 1} = \pi_i^{\{u, s, v\}}$ and thus scale predictions by the inverse of $\pi_i^{\{u, s, v\}}$ before taking the logarithm to recover the classifier log-likelihood ratio estimates.

Our algorithm is visualized in Figure~\ref{fig:binary_segmentation}, where we display the approximate gain curves of the two-step search for segments encountered in binary segmentation.

\begin{algorithm}[h]
    \caption{\texttt{changeforest}}
    \label{alg:changeforest}
    \begin{algorithmic}
        \State \textbf{Input: } \mbox{Observations $(x_i)_{i=1}^n$, a classifier $\hat p$, and a minimum relative segment length $\delta > 0$.}
        \State \textbf{Output: } Change point estimate $\hat \alpha \gets \textrm{BinarySegmentation}((x_i)_{i=1}^n, \hat p, \delta)$. \\
        \State \textbf{function }$\textrm{BinarySegmentation}((x_i)_{i=u+1}^v, \hat p, \delta)$
        \Indent
        \If{$v - u < 2 \delta n$}
            \State \textbf{return} $\varnothing$
        \EndIf \\

        \State $\hat s, ((\ell_{i, k, j})_{i=u+1, \ldots, v}^{k=1,2})_{j=1, 2, 3} \gets \textrm{TwoStepSearch}((x_i)_{i=u+1}^v, \hat p, \delta)$ \Comment{c.f. Algorithm~\ref{alg:two_step_search}}
        \State $q \gets \textrm{ModelSelection}(((\ell_{i, k, j})_{i=u+1, \ldots, v}^{k=1,2})_{j=1,2,3}, \delta)$ \Comment{c.f. Algorithm~\ref{alg:model_selection}}
        \\
        \If{$q \leq 0.02$}
            \State $\hat\alpha_\textrm{left} \gets \textrm{BinarySegmentation}((x_i)_{i=u+1}^{\hat s}, \hat p, \delta)$
            \State $\hat\alpha_\textrm{right} \gets \textrm{BinarySegmentation}((x_i)_{i=\hat s +1}^{v}, \hat p, \delta)$
            \State \textbf{return} $\hat\alpha_\textrm{left} \cup \{\hat s\} \cup \hat\alpha_\textrm{right}$
        \Else
            \State \textbf{return} $\varnothing$
        \EndIf\\
        \EndIndent
    \textbf{end function}
    \end{algorithmic}
\end{algorithm}
\begin{algorithm}[ht]
    \caption{TwoStepSearch}
    \label{alg:two_step_search}
    \begin{algorithmic}

        \State \textbf{Input: }\mbox{Observations $(x_i)_{i = u+1}^v$, a classifier $\hat p$, and a minimum relative segment length $\delta > 0$.}
        \State \textbf{Output: }A single change point estimate $\hat s$ and classifier log-likelihood ratios $((\ell_{i, k, j})_{i=u + 1, \ldots, v}^{k=1, 2})_{j=1, 2, 3}$ for model selection. \\
        \State \textbf{Step 1: } 
            \mbox{Define $(s^{(0)}_1, s^{(0)}_2, s^{(0)}_3) = (\lfloor \frac{3u + v}{4} \rfloor, \lfloor \frac{v + u}{2} \rfloor, \lfloor \frac{u + 3v}{4} \rfloor)$ and $\log_\eta(x) = \log((1 - \eta)x + \eta)$} for $\eta=\exp(-6)$ and set $\pi^{\{u, s, v\}}_i \gets \frac{s - u - \indicator_{\{i \leq s\}}}{v - u - 1}$. For $j = 1, 2, 3$, train binary classifiers $\hat p_{\{u, s^{(0)}_j, v\}}$ on $(\bX, \by_{\{u, s^{(0)}_j, v\}}) = ((x_i)_{i=u+1}^v, (1 + \indicator_{\{i > s^{(0)}_j\}})_{i=u+1}^v)$ and let
            \State \begin{equation*}
                \ell_{i, 1, j} =
                \log_\eta\left(
                    \hat p_{\{u, s^{(0)}_j, v\}}(x_i)_1 / \pi^{\{u, s^{(0)}_j, v\}}_i
                \right), \
                \ell_{i, 2, j} =
                \log_\eta\left(
                    \hat p_{\{u, s^{(0)}_j, v\}}(x_i)_2 / (1 - \pi^{\{u, s^{(0)}_j, v\}}_i)
                \right).
            \end{equation*}
            \State Select
            \begin{equation*}
                \hat s^{(1)} \in \argmax_{s = u + 1 + \lceil \delta n \rceil, \dotsc, v - \lceil \delta n \rceil} \max_{j = 1, 2, 3}
                \sum_{i = u+1}^s \ell_{i, 1, j} + \sum_{i = s + 1}^v \ell_{i, 2, j}.
            \end{equation*}
            \\
            \textbf{Step 2: }
            Train a binary classifier $\hat p_{\{u, \hat s^{(1)}, v\}}$ and let
            \State \begin{equation*}
                \ell_{i, 1} =
                \log_\eta\left(
                    \hat p_{\{u, s^{(1)}, v\}}(x_i)_1 / \pi^{\{u, s^{(1)}, v\}}_i
                \right), \
                \ell_{i, 2} =
                \log_\eta\left(
                    \hat p_{\{u, s^{(1)}, v\}}(x_i)_2 / (1 - \pi^{\{u, s^{(1)}, v\}}_i)
                \right).
            \end{equation*}
            \State Select
            \begin{equation*}
                \hat s = \hat s^{(2)} \in \argmax_{s = u + 1 + \lceil \delta n \rceil, \dotsc, v - \lceil \delta n \rceil} 
                \sum_{i = u+1}^s \ell_{i, 1} + \sum_{i = s + 1}^v \ell_{i, 2}.
            \end{equation*}

            \State \textbf{Return: } $\left(\hat s, ((\ell_{i, k, j})_{i=u + 1, \ldots, v}^{k=1, 2})_{j=1, 2, 3} \right)$
    \end{algorithmic}
\end{algorithm}

\begin{algorithm}[ht]
    \caption{ModelSelection}
    \label{alg:model_selection}
    \begin{algorithmic}
        \State \textbf{Input:} Classifier log-likelihood ratios $((\ell_{i, k, j})_{i=u + 1, \ldots, v}^{k=1, 2})_{j=1, 2, 3}$ from the two-step search and a minimum relative segment length $\delta > 0$.
        \State \textbf{Output:} An approximate $p$-value from the pseudo-permutation test. \\
        \vspace{-0.2cm}
        \State  \textbf{Step 1: } Recover the maximal gain encountered in the first step of the two-step search.
        \begin{equation*}
            G_0 = \max_{s=u+\lceil\delta n\rceil + 1, \ldots, v - \lceil\delta n\rceil} \ \max_{j=1, 2, 3}  \ \sum_{i=u+1}^s l_{i, 1, j} + \sum_{i=s+1}^v l_{i, 2, j}.
        \end{equation*}

        \State  \textbf{Step 2: } For random permutations $\sigma_l: \{u+1, \ldots, v\} \rightarrow \{u+1, \ldots, v\}$ for $l = 1 \ldots, 199$, compute maximal gains after permuting likelihoods
        \begin{equation*}
            G_l = \max_{j=1, 2, 3} \ \max_{s=u + \lceil \delta n \rceil +1, \ldots, v - \lceil \delta n \rceil} \ \sum_{i=u+1}^s l_{\sigma_l(i), 1, j} + \sum_{i=s+1}^v l_{\sigma_l(i), 2, j}.
        \end{equation*}\\
        \vspace{-0.2cm}
        \State \textbf{Return: } $\#\{l = 0, \ldots, 199 \mid G_l \geq G_0\} \, / \, 200$
    \end{algorithmic}
\end{algorithm}
\begin{figure}[ht]
    \centering
    \includegraphics[width = \textwidth]{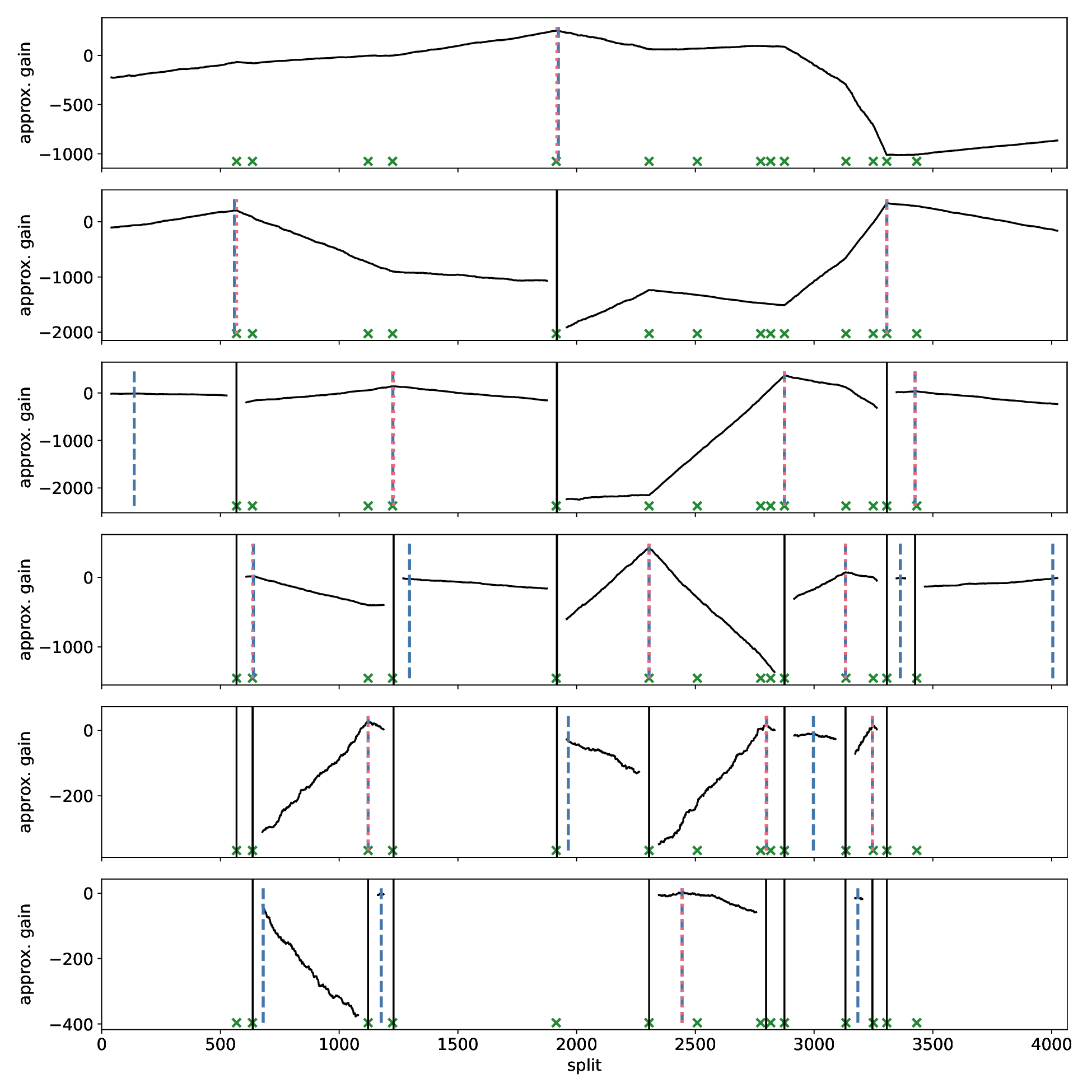}
    \caption{
        \label{fig:binary_segmentation}
        Approximate gain curves from the last step of the two-step search in the different steps of binary segmentation as encountered in \texttt{changeforest}.
        The underlying data set was simulated with the abalone setup, see Section~\ref{sec:simulations_setups} for details.
        The true change points at $t=568, 635, 1122, 1225, 1914, 2305, 2508, 2775, 2817, 2875, 3134, 3249, 3306, 3432$ are marked with green crosses.
        Change point estimates, corresponding to maxima of the approximate gain curves that are accepted in our model selection procedure, are marked in red.
        The second guess $\hat s^{(1)}$ for the two-step search is marked in blue.
        Current segment boundaries are marked with black vertical lines.
        Here, \texttt{changeforest} estimates change points at $t = 568, 636, 1122, 1229, 1917, 2305, 2444, 2798, 2875, 3132, 3245, 3306, 3425$, corresponding to an adjusted Rand index of 0.97.
    }
\end{figure}

\FloatBarrier

\subsection{Theoretical guarantees}
\label{sec:theoretical_guarantees}

We provide the following finite sample guarantee for the two-step search Algorithm \ref{alg:two_step_search} in settings with a single change point.

\begin{restatable}{theorem_}{finitesample}
    \label{prop:finitesample}
        Let $0 < \delta, \eta < 1$ and $\delta \leq \tau^0, \rho \leq 1 - \delta$. Set $s = \rho n$.
        Assume a single change point scenario with $n$ observations and a change point at $a^0 = \tau^0 n$.
        Let $\hat p$ satisfy $\max_{i=1, \ldots, n} | \hat p_s(X_i)_1 - p_s^\infty(X_i)_1| \overset{n\to\infty}{\rightarrow} 0$ in probability.
        Choose
        $$
            \hat a \in \argmax_{t=1, \ldots, n} \sum_{i = 1}^t \log_\eta\left(\frac{n}{s} \hat p_s(X_i)_1\right) + \sum_{i = t+1}^n \log_\eta\left(\frac{n}{n-s} \hat p_s(X_i)_2\right)
        $$
        as in Algorithm \ref{alg:two_step_search}.
        Then
        $$
        \BP \left[ \left|\frac{\hat a}{n} - \tau^0 \right| > \frac{C \log(n)}{\delta^2 \dTV{\BP_1}{\BP_2}^4 n} \right] \overset{n \to \infty}{\longrightarrow} 0,
        $$
        where $\dTV{\BP_1}{\BP_2}$ is the total variation distance between the two distributions $\BP_1$ and $\BP_2$ before and after the change point and
        $C = \frac{1}{(1 - \eta)^4} \log( \frac{1}{(1 - \rho)\rho} \frac{1}{\eta^2})^2$.
\end{restatable}

\noindent In particular, for our choices of $\rho \in \{\frac{1}{4}, \frac{1}{2}, \frac{3}{4} \}$ and $\eta=\exp(-6)$, the estimator $\frac{\hat\alpha}{n}$ is consistent at the rate $\frac{\log(n)}{n}$ for $\tau^0$.
Interestingly, this convergence rate is independent of the consistency rate of the classifier.
Generalizing Theorem \ref{prop:finitesample} to the case of multiple change points would require further assumptions, preventing the possible "canceling-out" effects as seen in Figure \ref{fig:two_step_search} (second row).
Theorem \ref{prop:finitesample} is based on a uniform consistency assumption for the classifier.
Such uniformly consistent classifiers exist \citep{bierens1983uniform}, including $k$-nearest neighbor classifiers \citep{kudraszow2013uniform}.
However, we are not aware of any such results for random forests.

\subsection{The \texttt{changeforest} software package}
A substantial part of this work is the \texttt{changeforest} software package, implementing Algorithm~\ref{alg:changeforest} for random forests and $k$-nearest neighbor classifiers.
It also implements the parametric change in mean setup, paired with binary segmentation, which we use as a baseline in the simulations.

The \texttt{changeforest} package is available for Python users on PyPI and conda-forge, \texttt{R} users on conda-forge, and Rust users on crates.io.
Its backend is implemented in the system programming language Rust \citep{matsakis2014rust}.
For inquiries, installation instructions, a tutorial, and more information, please visit \url{github.com/mlondschien/changeforest}.

\section{Empirical results}\label{sec:simulations}
We present results from a simulation study comparing the empirical performance of our classifier-based method \texttt{changeforest} to available multivariate nonparametric competitors.
The source code for the simulations, together with guidance on reproducing tables and figures, is available at \url{github.com/mlondschien/changeforest-simulations}.
The repository can easily be expanded to benchmark new methods provided by users.

\subsection{Competing methods}\label{sec:simulations_competing_methods}

We are aware of the following nonparametric methods for multivariate multiple change point detection with existing reasonably efficient implementations:
\citeauthor{matteson2014nonparametric}' \citeyearpar{matteson2014nonparametric} \emph{e-divisive} (ECP),
\citeauthor{lung2015homogeneity}'s \citeyearpar{lung2015homogeneity} \emph{MultiRank}, and kernel change point methods such as
\citeauthor{arlot2019kernel}'s \citeyearpar{arlot2019kernel} \emph{KCP} and \citeauthor{padilla2021optimal}'s \citeyearpar{padilla2021optimal} \emph{MNWBS}.

ECP searches for a single change point minimizing energy distances within segments.
The method generalizes to multiple change point scenarios through binary segmentation, and the significance of change points is assessed with a permutation test.
An implementation of ECP is available through the \texttt{R}-package \texttt{ecp} \citep{james2013ecp}, which we use in the simulations.

MultiRank uses a rank-based multivariate homogeneity statistic combined with dynamic programming.
The significance of change points is assessed based on asymptotic theory.
To run simulations with MultiRank, we used code made available to us by the authors.
This is included in the simulations repository.

Kernel change point methods minimize a within-segment average kernelized dissimilarity measure. 
\citet{arlot2019kernel} propose KCP, an algorithm based on dynamic programming.
An efficient implementation can be found in the Python package \texttt{ruptures} \citep{truong2020selective}.
We used the slope heuristic proposed by \citet{arlot2019kernel} for model selection.

MNWBS uses a kernel-density-based nonparametric extension of the CUSUM statistic for single change point localization.
The method extends to multiple change point scenarios through wild binary segmentation \citep{fryzlewicz2014wild}.
Model selection is done by thresholding of the test statistic, where the threshold is chosen based on the Kolmogorov–Smirnov test statistic.

For all packages, we used default parameters, if available. 
For ECP, this results in $\alpha=1$ and using 199 permutations at a significance level of 0.05.
For KCP, we use a Gaussian kernel with a bandwidth of 0.1 after normalization by the median absolute deviation of absolute consecutive differences, see Section~\ref{sec:simulations_setups}.
This choice of bandwidth was optimal for the simulated scenarios, see also Section~\ref{sec:tuning} and Table~\ref{tab:tuning_parameters_kcp}.
For MNWBS, we used the bandwidth $5 \cdot (n \log(n) / \delta)^{1 / p}$ as proposed by \citet{padilla2021optimal}.
We used 50 random intervals to reduce the computational cost.
We supplied information about the minimum relative segment length $\delta$, equal to 0.01 in the main simulations, to each method, either by informing about the minimum number of observations per segment $\lceil \delta n \rceil$ or by informing about the maximum number of change points $\lfloor 1 / \delta \rfloor$.

Other multivariate nonparametric multiple change point detection methods that we did not include in our simulations include \citeauthor{liu2013change}'s \citeyearpar{liu2013change} \emph{RuLSIF}, \citeauthor{cabrieto2017detecting}'s \citeyearpar{cabrieto2017detecting} \emph{Decon} with a kernel-based running statistic, and \citeauthor{zhang2021graph}'s \citeyearpar{zhang2021graph} \emph{gMulti}.
For gMulti, no implementation has been made available by the time of writing.
The existing implementation for Decon in the R package \textsf{kcpRS} does not implement the kernel-based nonparametric running statistic.
The implementation for RuLSIF is only available in Matlab and does not include an automatic model selection procedure.

\subsection{Simulation setups}\label{sec:simulations_setups}
We evaluate all methods on three parametric and six nonparametric scenarios.
We display their key characteristics below the header in Table \ref{tab:main_results}.

As for parametric setups, we use the \emph{change in mean} (CIM) and \emph{change in covariance} (CIC) setups from \citet[Section 4.3]{matteson2014nonparametric} and the \emph{Dirichlet} setup from \citet[Section 6.1]{arlot2019kernel}.
For both works, the authors present empirical results of their methods on data sets for three sets of parameters.
We use the set of parameters corresponding to medium difficulty.

For the change in mean and change in covariance setups, we generate time series of dimension $d=5$ with $n=600$ observations and change points at $t = 200, 400$.
Observations in the first and last segment are independently drawn from a standard multivariate Gaussian distribution.
In the change in mean setup, entries in the second segment are i.i.d.\ Gaussian with a mean shift of $\mu = 2$ in each coordinate.
In the change in covariance setup, entries in the second segment are i.i.d.\ Gaussian with mean zero and unit variance, but with a covariance of $\rho = 0.7$ between coordinates.
Lastly, the Dirichlet data set consists of $n = 1000$ observations and has dimension $d = 20$.
Ten change points are located at $t = 100, 130, 220, 320, 370, 520, 620, 740, 790, 870$.
In each segment, the observations are distributed according to the Dirichlet distribution, with the $d$ parameters drawn i.i.d.\ uniformly from $[0, 0.2]$.

We also generate time series with nonparametric in-segment distributions.
For this, we use popular multiclass classification data sets and treat observations corresponding to a single class as i.i.d.\ draws from a class-specific distribution.
We discard classes with fewer than $\delta n = \frac{n}{100}$ observations and shuffle and randomly concatenate the remaining classes.
Categorical variables are dummy encoded.
Finally, each covariate gets normalized by the median absolute deviation of absolute consecutive differences, as is also used by \citet{fryzlewicz2014wild} and further discussed by \citet{kovacs2020discussion}.
This normalization does not affect our random forest-based method \texttt{changeforest} and rank-based MultiRank but increases performance for distance-based competitors ECP, KCP, and MNWBS.
We did not normalize covariates in the parametric simulations setup, as this would decrease performance for ECP, KCP, and MNWBS.

We use the following data sets:
The \emph{iris} flower data set \citep{anderson1936species} contains 150 samples of four measurements from three species of the iris flower.
The \emph{glass}  identification data set \citep{evett1989rule}, initially motivated by criminological investigation, contains measurements of chemical compounds for different types of glass.
The \emph{wine}  data set \citep{cortez2009modeling} is the result of a physicochemical analysis of both red and white vinho verde from north-western Portugal.
It includes quality scores from blind testing, which we use as class labels. We encode the wine color as binary.
The Wisconsin \emph{breast cancer} data set \citep{street1993nuclear}  contains nominal characteristics of cell nuclei computed from a fine needle aspirate of breast tissue labeled as malignant or benign.
The resulting simulated time series has a single change point.
The \emph{abalone} data set \citep{waugh1995extending} contains easily obtainable physical measurements of snails and their age, determined by cutting through the shell cone and counting the number of rings.
Two categorical variables are dummy encoded.
We use the number of rings as class labels.
The \emph{dry beans} data set \citep{koklu2020multiclass} contains shape measurements derived from images of seven different types of dry beans.

The change in mean and change in covariance and possibly also the abalone and wine data sets feature similar or equal distributions in non-neighboring segments.
We expect these scenarios to be challenging for our methodology, see Section~\ref{sec:two-step-search} and Figure~\ref{fig:two_step_search}.

\subsection{Performance measures}
We mainly use the adjusted Rand index \citep{hubert1985comparing}, a standard measure to compare clusterings, to evaluate the goodness of fit of estimated segmentations.
Given two partitionings of $n$ observations, the Rand index \citep{rand1971objective} is the number of agreements, pairs of observations either in the same or in different subsets for both partitionings, divided by the total number of pairs $\binom{n}{2}$.
The adjusted Rand index is the normalized difference between the Rand index and its expectation when choosing partitions randomly.
Consequently, the adjusted Rand index can take negative values but no values greater than one.
Its expected value is zero for random class assignments, but the expected value for random segmentations, namely class assignments consistent with the time structure, is significantly higher.
We present a set of segmentations together with their adjusted Rand indices in Table \ref{tab:adj_rand_examples} in the Appendix.

We believe that an adjusted Rand index above 0.95 corresponds to a segmentation that is close to perfect.
An adjusted Rand index of 0.9 corresponds to a good segmentation, possibly oversegmenting with a false positive close to a true change point or missing one of two close change points.
On the other hand, we expect a segmentation with an adjusted Rand index significantly below 0.8 not to provide much scientific value. 

We additionally use the Hausdorff distance to evaluate change point estimates.
For two segmentations $\alpha, \alpha'$, set $d(\alpha, \alpha') \coloneqq \frac{1}{n} \max_{a\in\alpha} \min_{a'\in\alpha'} | a - a' |$.
Then $d(\alpha^0, \hat \alpha)$ is the largest relative distance of a true change point in $\alpha^0$ to its closest counterpart in $\hat \alpha$, and $d(\hat \alpha, \alpha^0)$ is the largest relative distance of a found change point in $\hat \alpha$ to the closest entry in $\alpha^0$.
Thus $d(\alpha^0, \hat \alpha)$ especially penalizes undersegmentation while $d(\hat \alpha, \alpha)$ penalizes oversegmentation.
Define the Hausdorff distance as $d_H(\alpha, \alpha') \coloneqq \max(d(\alpha, \alpha'), d(\alpha', \alpha))$.
Table \ref{tab:adj_rand_examples} also includes the Hausdorff distance for the selected segmentations.

\subsection{Main simulation results}
We present the results of our main simulation study in Table \ref{tab:main_results}, where we display average adjusted Rand indices over 500 simulations.
\begin{table}[h]
    \caption{
        \label{tab:main_results}
        Average adjusted Rand indices over 500 simulations, with standard deviations in parentheses.
        Optimal scores are marked in bold.
        The MultiRank method was unable to produce results for the dry beans simulation setup.
        For the abalone, wine, and dry beans simulation setups, the computational cost of MNWBS was prohibitive.
        Average and worst denote the average and worst performances across all setups.
    }
    \makebox[\textwidth][c]{
    \begin{tabular}{lrrrrr}
                       &            CIM         &            CIC            &                 Dirichlet &                      iris &               glass \\
        \midrule
        $n, d, K$ & $600, 5, 3$ & $600, 5, 3$ & $1000, 20, 11$ & $150, 4, 3$ & $214, 8, 6$ \\
        \midrule
        change in mean &            1.00 (0.00) &               0.00 (0.00) &               0.00 (0.00) &               0.93 (0.08) &               0.49 (0.17) \\
        changekNN      &            0.99 (0.02) &               0.01 (0.08) &               0.69 (0.25) &               0.99 (0.03) &               0.88 (0.09) \\
        ECP            &            0.99 (0.03) &               0.40 (0.32) &               0.84 (0.08) &               0.99 (0.03) &               0.61 (0.24) \\
        KCP            &            1.00 (0.01) &      \textbf{0.95} (0.06) &               0.87 (0.08) &      \textbf{0.99} (0.02) &               0.74 (0.23) \\
        MultiRank      &   \textbf{1.00} (0.00) &               0.00 (0.04) &               0.97 (0.02) &               0.57 (0.26) &               0.60 (0.09) \\
        MNWBS          &      0.99 (0.06) &               0.69 (0.18) &               0.67 (0.10) &               0.99 (0.04) &               0.71 (0.18) \\
        \texttt{changeforest}   &   0.99 (0.03) &               0.93 (0.12) &      \textbf{0.99} (0.02) &               0.98 (0.04) &               \textbf{0.92} (0.07) \\
    \end{tabular}
    }
    \newline
    \vspace*{0.2cm}
    \newline
    \makebox[\textwidth][c]{
    \begin{tabular}{lrrrrrr}
        &  breast cancer &        abalone &           wine &      dry beans &        average & worst \\
        \midrule
        $n, d, K$ & $699, 9, 2$ & $4066, 9, 15$ & $6462, 12, 5$ & $13611, 16, 7$ & & \\
        \midrule
        change in mean &            0.79 (0.03) &           0.88 (0.06) &               0.98 (0.02) &               0.99 (0.01) &               0.68 (0.07) &          0.00 \\
        changekNN      &            0.91 (0.17) &           0.88 (0.10) &               0.97 (0.04) &               1.00 (0.01) &               0.81 (0.11) &          0.01 \\
        ECP            &            1.00 (0.02) &           0.87 (0.06) &               0.98 (0.03) &               \textbf{1.00} (0.00) &      0.86 (0.14) &          0.43 \\
        KCP            &            \textbf{1.00} (0.03) &  0.92 (0.04) &               0.99 (0.01) &               \textbf{1.00} (0.00) &      0.94 (0.09) &          0.74 \\
        MultiRank      &            0.92 (0.12) &           0.81 (0.12) &               0.96 (0.03) &                            &              0.73 (0.11) &          0.00 \\
        MNWBS          &            0.99 (0.05) &                       &                           &                           &               0.84 (0.12) &          0.67 \\
        \texttt{changeforest}   &            0.98 (0.07) &           \textbf{0.93} (0.05) &      \textbf{0.99} (0.02) &      1.00 (0.01) &               \textbf{0.97} (0.06) &          \textbf{0.92} \\
    \end{tabular}
    }
    \vspace*{0.2cm}
    
\end{table}

Our method, \texttt{changeforest}, scores above 0.9 on average for each simulation scenario studied.
All other methods have an average score lower than 0.75 for at least one scenario.
\texttt{changeforest} is the best-performing method for four setups: Dirichlet, glass, abalone, and wine.
Furthermore, in four further setups, change in mean, iris, breast cancer, and dry beans, our method \texttt{changeforest} scores above 0.975 on average, corresponding to an almost perfect segmentation.
The only setup for which \texttt{changeforest} does not perform best or above 0.975 is change in covariance.
Here, \texttt{changeforest} scores second-best behind KCP and the only other method to score above 0.5 is MNWBS.
We compute the average score for each method to combine the results for the nine simulation scenarios.
\texttt{changeforest} scores highest.
Overall, \texttt{changeforest} is widely applicable, with robust performance on all parametric and nonparametric setups.

We include Table \ref{tab:main_results_hausdorff_median} with median Hausdorff distances in the Appendix.
The results are similar, with \texttt{changeforest} being among the methods with the best score, except for change in covariance, where it achieves a median Hausdorff distance of 0.013, behind KCP with a median Hausdorff distance of 0.007.

Table \ref{tab:main_results_n_cpts} in the Appendix shows the average number of estimated change points for each method and simulation setup.
\texttt{changeforest} slightly oversegments for most setups, but not for the glass and wine setups.

Figure~\ref{fig:histograms} in the Appendix shows histograms of the nonparametric methods' change point estimates for the Dirichlet setup, where the underlying change points are constant across simulations.
Corresponding to high adjusted Rand indices of 0.99 and 0.97, change point estimates by \texttt{changeforest} and MultiRank are sharply concentrated around the true change point locations.
Meanwhile, estimates by changekNN, ECP, KCP, and MNWBS are visually more spread out, corresponding to worse adjusted Rand indices smaller than 0.9.

We present average computation times for each method in Table \ref{tab:computation_times}.
Simulations were run on eight Intel Xeon 2.3 GHz cores with 4 GB of RAM available per core (32 GB in total).
The simple change in mean method is blazingly fast.
On the largest simulation setup dry beans with $n = 13611$ and $d = 16$, the average computation time was around 0.002 seconds.
\texttt{changeforest} is the fastest nonparametric method on the dry beans simulation setup, requiring around 2.6 seconds on average.
On the other hand, MNWBS and ECP are prohibitively slow.
MNWBS takes around four hours on average to estimate change points on the Dirichlet simulation setup.
It did not finish in a reasonable time on the larger dry beans, wine, and abalone simulation setups.
ECP is faster but still requires more than one hour to estimate change points on the dry beans simulation setup on average.
The code for MultiRank is experimental and broke for the dry beans simulation setup due to the multiple dummy encoded columns.
We further analyze the computational efficiency of available implementations in Section~\ref{sec:evolution_performance}.
\begin{table}[h]
    \caption{
        \label{tab:computation_times}
        Average computation times in seconds on 8 Intel Xeon 2.3 GHz cores. 
        }
    \vspace*{0.2cm}
    \begin{small}
    \makebox[\textwidth][c]{
        \begin{tabular}{lrrrrrrrrr}
            &   CIM &   CIC & Dirichlet &  iris & glass & breast & abalone &  wine & dry \\
                    &       &       &           &       &       & cancer &         &       & beans \\
            \midrule
            $n, d$  &  600, 5 & 600, 5 & 1000, 20 & 150, 4 & 214, 8 & 699, 9 & 4066, 9 & 6462, 12 & 13611, 16  \\
            \midrule
            change in mean      &  <0.01 &  <0.01 &      <0.01 &  <0.01 &  <0.01 &          <0.01 &    <0.01 &  <0.01 &      <0.01 \\
            changekNN           &  0.04 &  0.04 &      0.15 &  <0.01 &  0.01 &          0.05 &     1.8 &   4.7 &        24 \\
            ECP                 &   4.2 &   4.5 &        18 &  0.37 &  0.72 &           4.3 &     386 &   741 &      5356 \\
            KCP           &  0.08 &  0.09 &      0.18 &  0.02 &  0.04 &          0.10 &     2.5 &   6.6 &        31 \\
            MultiRank           &  0.77 &  0.91 &       2.1 &  0.10 &  0.17 &           1.0 &      34 &    90 &           \\
            MNWBS        &         2132 & 2143 & 14261 &        115 &   207 &        3211  &          &        & \\
            \texttt{changeforest} &  0.08 &  0.10 &      0.43 &  0.03 &  0.06 &          0.07 &    0.86 &  0.89 &       2.6 \\
        \end{tabular}
    }
    \end{small}
\end{table}

\subsection{Performance for varying numbers of segments and samples}\label{sec:evolution_performance}
The performance of change point estimation methods depends on an interplay between the number of observations, the size of distributional shifts between segments, and the number of change points.
In Table \ref{tab:main_results} we see that for some data sets, such as CIM, iris, and dry beans, all methods achieve average adjusted Rand indices above 0.975.
Here, we believe that the signal-to-noise ratio is so high that change point estimation is easy, and the difference between the average adjusted Rand index and the perfect score of 1.0 is primarily a result of random false positives.
On the other hand, the change in covariance and glass setups appear truly difficult, clearly separating methods that perform well (above 0.9) and those performing not much better than random guessing (below 0.8).

We introduce two simulation settings that let us generate time series of any length $n$ with any number of change points $K$, allowing us to control the underlying signal-to-noise ratio.
We draw $\tilde N_k \sim \Exp(1)$ i.i.d.\ for $k=1, \ldots, K$ and define $N_i = \frac{1}{10 K} + \frac{0.9}{\sum_{j=1}^K \tilde N_j} \tilde N_k$.
We use $\textrm{round}(n N_k)$ as segment lengths, where we round such that $\sum_{k=1}^K \textrm{round}(n N_k) = n$.
This results in a minimum relative segment length of $\delta = \frac{1}{10K}$.

We present two simulation setups: One based on the Dirichlet distribution and one based on the dry beans data set.
We construct the former as in Section~\ref{sec:simulations_setups}.
For each segment, we draw i.i.d.\ observations from a Dirichlet distribution of dimension $d=20$,  after sampling its parameters uniformly from $[0, 0.2]$.
To generate time series of arbitrary length based on the dry beans data set, we first normalize each covariate to have an average within-class variance of one.
Then, for each segment, we draw a class label different to that of the previous segment and draw $\textrm{round}(n N_k)$ observations with replacement from the corresponding class.
We add i.i.d.\ standard Gaussian noise to each observation.
As in Section~\ref{sec:simulations_setups}, we finally normalize each covariate by the median absolute deviation of absolute consecutive differences.
We simulate 500 data sets with $K = 20, 80$ and vary $n = 250, 354, 500, 707, 1000, \dotsc, 64000$ for each setup.
The mean adjusted Rand indices of change point estimates from our and competing methods are displayed in Figure~\ref{fig:evolution_performance}.
Each method was supplied with the minimum relative segment length $\delta = \frac{1}{10K}$.
\begin{figure}[h]
    \centering
    \includegraphics[width = \textwidth]{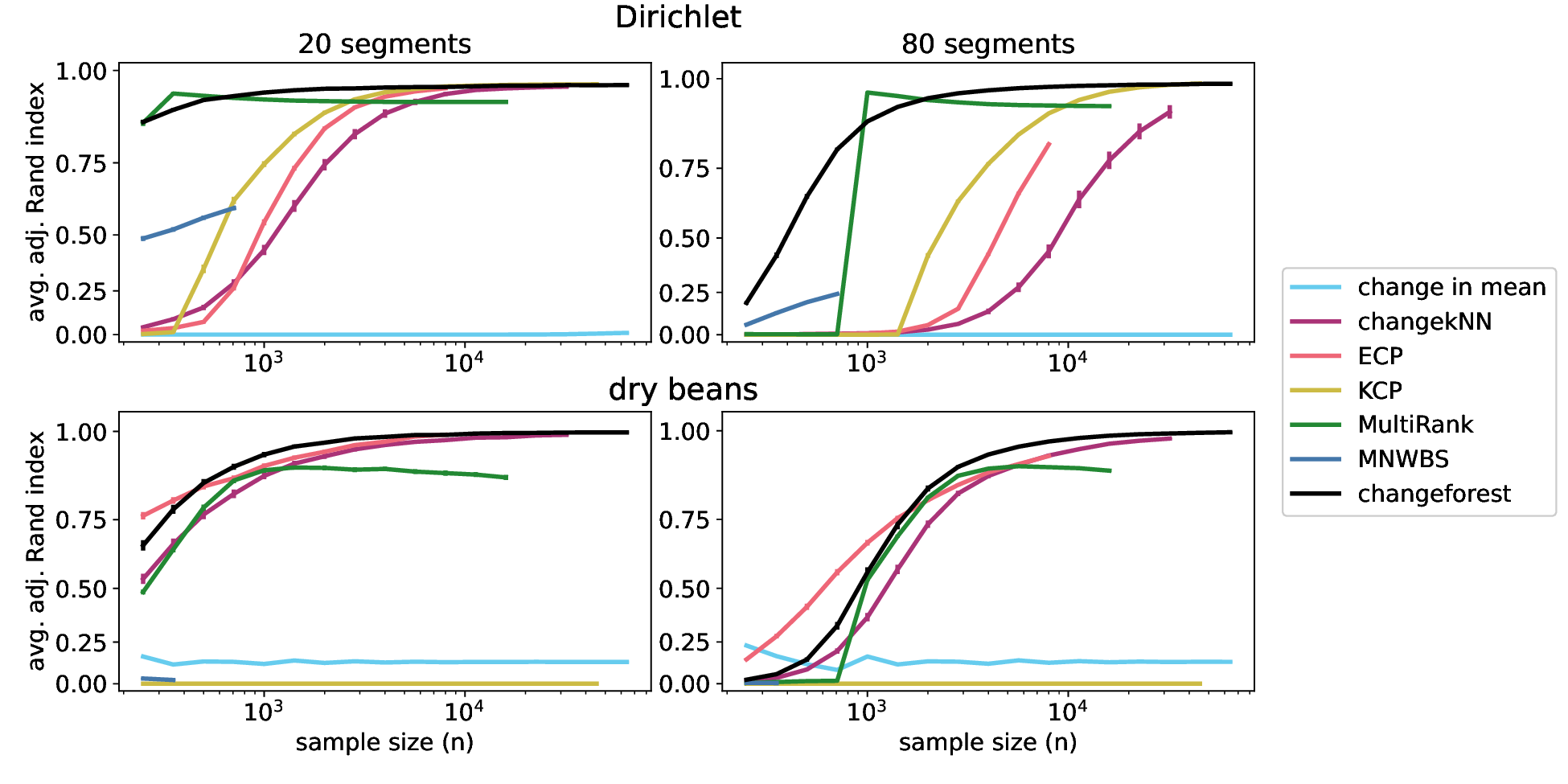}
    \vspace{-0.5cm}
    \caption{
        \label{fig:evolution_performance}
        Evolution of the average adjusted Rand index for methods by the number of observations for 20 and 80 segments.
        The $y$-axis was scaled to better pronounce values in $[0.6, 1]$.
        Two times the standard deviation of the mean score is marked with vertical bars.
        All methods were run until computational time got prohibitive (ECP and MNWBS) or 32 GB of memory was insufficient for computation (KCP, MultiRank, and changekNN).
    }
\end{figure}

The change in mean method does not perform well for either setup.
Our method \texttt{changeforest} performs best, except for very small sample sizes.
For the dry beans simulation setup when $n / K < 20$, ECP performs best, but not much better than 0.8.
For the Dirichlet simulation setup for $K = 20, n = 354$ and $K = 80, n = 1000$, MultiRank performs best.
Surprisingly, KCP performs very badly for the dry beans-based simulation, not selecting any change points at all.
This might be due to the higher number of change points compared to the main simulations or the added Gaussian noise.

We display the average computational time of the different methods in Figure~\ref{fig:evolution_time}, together with dashed grey lines corresponding to linear and quadratic time complexity.
Again, the parametric change in mean method is very fast.
For both change in mean and \texttt{changeforest}, time complexity appears to scale approximately linear in the number of observations.
Meanwhile, changekNN, MultiRank, KCP, and MNWBS scale approximately quadratic, while ECP appears to have worse than quadratic time complexity. \texttt{changeforest} is the quickest nonparametric method for $n \gtrsim 5000$.
\begin{figure}[h]
    \centering
    \includegraphics[width = \textwidth]{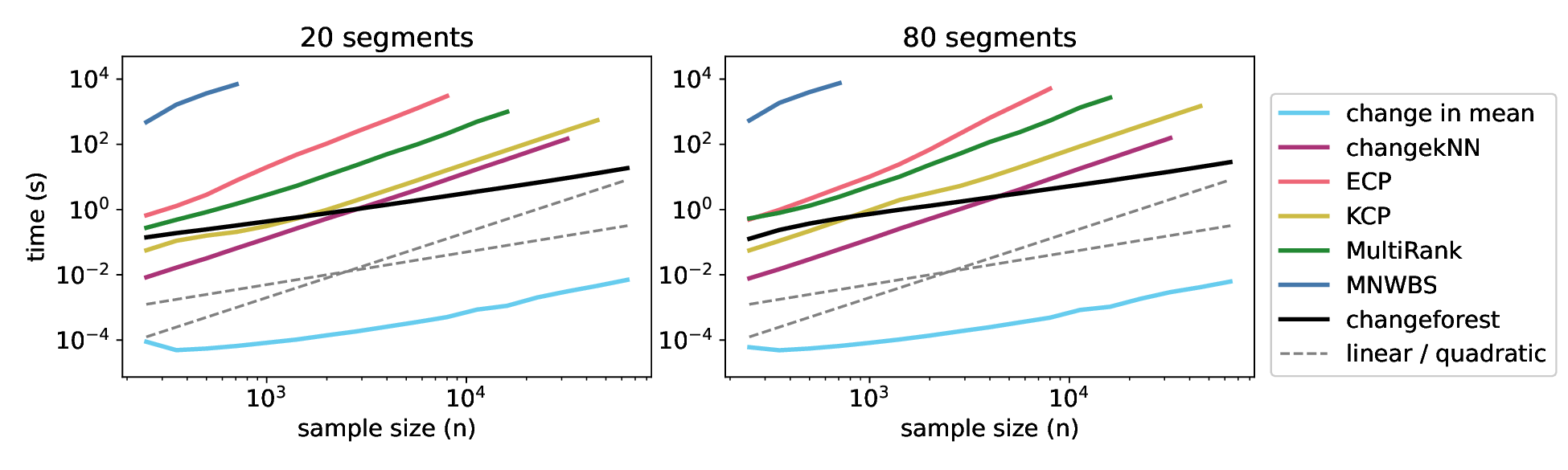}
    \vspace{-0.2cm}
    \caption{
        \label{fig:evolution_time}
        Average time for change point detection on 8 Intel Xeon 2.3 GHz cores with 4 GB of RAM available per core for the Dirichlet simulation setup.
        All methods were run until computational time got prohibitive (ECP and MNWBS) or 32 GB of memory was insufficient (KCP, MultiRank, and changekNN).
    }
\end{figure}

\subsection{Dependence on hyperparameters}\label{sec:tuning}
We investigate how the hyperparameters of random forests affect \texttt{changeforest}'s performance.
For this, we vary the three most important parameters for random forests: the number of trees from $20$ to $100$ (default) to $500$, the maximal depth of individual trees from $2$ to $8$ (default) to infinity, and the number of features considered at each split (mtry) from $1$ to $\sqrt{d}$ (default, rounded down) to $d$.
The average adjusted Rand indices for each set of parameters for the simulation setups introduced in Section~\ref{sec:simulations_setups} are in Table \ref{tab:tuning_parameters} in the Appendix.
The average computational times are in Table \ref{tab:tuning_parameters_time}.
We summarize the results here.

Choosing even very extreme parameters, such as $\textrm{number of trees}=20$, $\textrm{maximal depth}=2$, and $\textrm{mtry}=1,$ yields acceptable change point estimation performance for simple simulation setups.
With these parameters, \texttt{changeforest} performs on average better than ECP, MultiRank, MNWBS, and changekNN, our method paired with a $k$-nearest neighbor classifier.
It even outperforms the default configuration (100, 8, $\sqrt{d}$) for the iris, breast cancer, and wine simulation setups, possibly due to selecting fewer false positives because of higher noise in the predictions.
On the other hand, increasing the number of trees from 100 to 500 and splitting individual trees to purity does not significantly increase the performance compared to the default values while increasing computational cost fivefold.

This low dependence on the choice of tuning parameters is in contrast to KCP, where the bandwidth of the Gaussian kernel has to be fine-tuned for optimal performance.
Table \ref{tab:tuning_parameters_kcp} in the Appendix shows the estimation performance of KCP for different kernels and bandwidths.

\subsection{False positive rates}\label{sec:false_positive_rates}
We empirically analyze our model selection procedure.
For this, we select the largest class of each data set from Section~\ref{sec:simulations_setups}, shuffle the observations before applying the change point detection methods, and collect the percentage out of 2500 simulations where at least one change point was detected.
See Table \ref{tab:false_positive} for the results.
\begin{table}[h]
    \caption{
        \label{tab:false_positive}
        Percentage of simulations where at least one change point was detected in a homogeneous data set.
        For each data set and method, 2500 simulations were performed.
    }

    \makebox[\textwidth][c]{
        \begin{tabular}{lrrrrrrrrrr}
            data set & CIM & CIC & Dirichlet & iris & glass & breast & abalone & wine & dry &  worst \\
                    &     &     &           &      &       & cancer &         &      & beans &   \\
            \midrule
            change in mean      &          0.12 &          0.12 &              100.0 &          27.04 &           91.44 &                    3.44 &              7.68 &           1.20 &                1.40 &  100.00 \\ 
            changekNN           &          2.44 &          2.44 &                1.80 &           3.64 &            2.40 &                   29.80 &              3.28 &           6.80 &                2.04 &  29.80 \\
            ECP                 &          5.76 &          6.20 &                5.08 &           6.80 &            5.72 &                    4.76 &              4.76 &           5.16 &                5.16 &  6.80 \\
            KCP           &          0.00 &          0.00 &                0.00 &           3.04 &            0.32 &                    5.24 &             11.32 &           0.08 &                0.04 &  11.32 \\
            MultiRank           &          0.20 &          0.20 &               12.08 &          93.64 &           90.44 &                  100.0 &            100.0 &         100.0 &                     & 100.00 \\
            MNWBS             &   95.08        & 95.08           &    79.36        &  95.16        &   77.68           &        98.20              &&&&    98.20\\
            \texttt{changeforest} &          3.36 &          3.36 &                3.76 &           5.00 &            3.80 &                    3.80 &              3.00 &           3.52 &                2.48 &  5.00 \\
        \end{tabular}
    }
\end{table}

As our model selection procedure is not a true permutation test, we observe more false positives than expected, given the approximate significance threshold of $0.02$ for both \texttt{changeforest} and the $k$-nearest neighbor method changekNN.
However, \texttt{changeforest} has a similar false positive rate across all simulation setups.
The methods based on heuristics (KCP and MNWBS) or asymptotic theory (MultiRank, change in mean) have varying degrees of false positive rates by data set.
ChangekNN, which uses the same model selection procedure as \texttt{changeforest}, has a false positive rate of almost 30\% for the breast cancer data set.

\section{Future work}
We comment on some possible avenues for future work.

\paragraph*{Time series featuring serial correlations}
The setting outlined in Section~\ref{sec:non_parametric_methodology} requires independent observations.
In practice, time series data often exhibit serial correlation and the change in distribution may only manifest through a change in the structure of this serial correlation.
Even though the model assumptions are violated, our algorithm can be applied to such data sets by augmenting the data with lagged observations.
It would be interesting to compare the performance of \texttt{changeforest} when applied to such augmented time series with the performance of change point detection methods specifically designed to detect changes in the autocorrelation structure.

\paragraph*{Extending the theoretical guarantees to multiple change points}
As discussed in Section~\ref{sec:theoretical_guarantees}, the two-step search is not guaranteed to find a true change point in a multiple change point scenario due to possible canceling out effects (see also Figure \ref{fig:two_step_search}).
One possible remedy would be to pair the two-step search with wild binary segmentation \citep{fryzlewicz2014wild} or seeded binary segmentation \citep{kovacs2020seeded}.
This would help in scenarios with very short segments and equal or similar distributions in surrounding segments but would require changes to the model selection procedure presented to avoid overfitting.
As we already observe very good empirical performance with binary segmentation, we did not pursue this further.

\section{Conclusion}\label{sec:discussion}

Our proposed \texttt{changeforest} algorithm uses random forests to efficiently and accurately perform multiple change point detection across a wide range of data set types.
Motivated by parametric change point detection methodology, the algorithm optimizes a nonparametric classifier-based log-likelihood ratio.
When coupled with random forests, it performs very well or better than competitors for all simulation setups studied.
The computation time of \texttt{changeforest} scales almost linearly with the number of observations, enabling nonparametric change point detection even for very large data sets.
The \texttt{changeforest} software package is available to \texttt{R}, Python, and Rust users.
See \url{github.com/mlondschien/changeforest} for installation instructions.

\acks{Malte Londschien is supported by the ETH Foundations of Data Science and the ETH AI Center.
Solt Kov\'acs and Peter B\"uhlmann are supported by the European Research Council (ERC) under the European Union’s Horizon 2020 research and innovation programme (Grant agreement No. 786461
CausalStats - ERC-2017-ADG).
}

\contrib{S.K.\ conceived and co-supervised the project.
    M.L.\ and S.K.\ developed the methodology.
    M.L.\ designed and implemented the changeforest software package, designed, implemented, and ran the simulations, developed the theory, and wrote the manuscript.
    P.B.\ co-supervised the project and provided feedback and ideas during the development.
    S.K.\ and P.B.\ provided feedback on the manuscript.
}

\bibliography{bib}

\begin{thebibliography}{}

\bibitem[\protect\citeauthoryear{Anderson}{Anderson}{1936}]{anderson1936species}
Anderson, E. (1936).
\newblock The species problem in {I}ris.
\newblock {\em Annals of the Missouri Botanical Garden\/}~{\em 23\/}(3),
  457--509.

\bibitem[\protect\citeauthoryear{Arlot, Celisse, and Harchaoui}{Arlot
  et~al.}{2019}]{arlot2019kernel}
Arlot, S., A.~Celisse, and Z.~Harchaoui (2019).
\newblock A kernel multiple change-point algorithm via model selection.
\newblock {\em Journal of Machine Learning Research\/}~{\em 20\/}(162), 1--56.

\bibitem[\protect\citeauthoryear{Arya, Mount, Netanyahu, Silverman, and
  Wu}{Arya et~al.}{1998}]{arya1998optimal}
Arya, S., D.~M. Mount, N.~S. Netanyahu, R.~Silverman, and A.~Y. Wu (1998).
\newblock An optimal algorithm for approximate nearest neighbor searching fixed
  dimensions.
\newblock {\em Journal of the ACM\/}~{\em 45\/}(6), 891--923.

\bibitem[\protect\citeauthoryear{Bierens}{Bierens}{1983}]{bierens1983uniform}
Bierens, H.~J. (1983).
\newblock Uniform consistency of kernel estimators of a regression function
  under generalized conditions.
\newblock {\em Journal of the American Statistical Association\/}~{\em
  78\/}(383), 699--707.

\bibitem[\protect\citeauthoryear{Breiman}{Breiman}{2001}]{breiman2001random}
Breiman, L. (2001).
\newblock Random forests.
\newblock {\em Machine Learning\/}~{\em 45\/}(1), 5--32.

\bibitem[\protect\citeauthoryear{Cabrieto, Tuerlinckx, Kuppens, Grassmann, and
  Ceulemans}{Cabrieto et~al.}{2017}]{cabrieto2017detecting}
Cabrieto, J., F.~Tuerlinckx, P.~Kuppens, M.~Grassmann, and E.~Ceulemans (2017).
\newblock Detecting correlation changes in multivariate time series: A
  comparison of four non-parametric change point detection methods.
\newblock {\em Behavior Research Methods\/}~{\em 49\/}(3), 988--1005.

\bibitem[\protect\citeauthoryear{Carlstein}{Carlstein}{1988}]{carlstein1988nonparametric}
Carlstein, E. (1988).
\newblock Nonparametric change-point estimation.
\newblock {\em The Annals of Statistics\/}~{\em 16\/}(1), 188--197.

\bibitem[\protect\citeauthoryear{Chang, Li, Yang, and P{\'o}czos}{Chang
  et~al.}{2019}]{chang2019kernel}
Chang, W.-C., C.-L. Li, Y.~Yang, and B.~P{\'o}czos (2019).
\newblock Kernel change-point detection with auxiliary deep generative models.
\newblock In {\em International Conference on Learning Representations}.

\bibitem[\protect\citeauthoryear{Chen and Zhang}{Chen and
  Zhang}{2015}]{chen2015graph}
Chen, H. and N.~Zhang (2015).
\newblock Graph-based change-point detection.
\newblock {\em The Annals of Statistics\/}~{\em 43\/}(1), 139--176.

\bibitem[\protect\citeauthoryear{Cleynen and Lebarbier}{Cleynen and
  Lebarbier}{2014}]{cleynen2014segmentation}
Cleynen, A. and {\'E}.~Lebarbier (2014).
\newblock Segmentation of the {Poisson} and negative binomial rate models: a
  penalized estimator.
\newblock {\em ESAIM: Probability and Statistics\/}~{\em 18}, 750--769.

\bibitem[\protect\citeauthoryear{conda-forge community}{conda-forge
  community}{2015}]{conda_forge_community_2015_4774216}
conda-forge community (2015, July).
\newblock {The conda-forge Project: Community-based Software Distribution Built
  on the conda Package Format and Ecosystem}.

\bibitem[\protect\citeauthoryear{Cortez, Cerdeira, Almeida, Matos, and
  Reis}{Cortez et~al.}{2009}]{cortez2009modeling}
Cortez, P., A.~Cerdeira, F.~Almeida, T.~Matos, and J.~Reis (2009).
\newblock Modeling wine preferences by data mining from physicochemical
  properties.
\newblock {\em Decision support systems\/}~{\em 47\/}(4), 547--553.

\bibitem[\protect\citeauthoryear{D\"umbgen}{D\"umbgen}{1991}]{dumbgen1991asymptotic}
D\"umbgen, L. (1991).
\newblock The asymptotic behavior of some nonparametric change-point
  estimators.
\newblock {\em The Annals of Statistics\/}~{\em 19\/}(3), 1471--1495.

\bibitem[\protect\citeauthoryear{Enikeeva and Harchaoui}{Enikeeva and
  Harchaoui}{2019}]{enikeeva2019high}
Enikeeva, F. and Z.~Harchaoui (2019).
\newblock High-dimensional change-point detection under sparse alternatives.
\newblock {\em The Annals of Statistics\/}~{\em 47\/}(4), 2051--2079.

\bibitem[\protect\citeauthoryear{Evett and Spiehler}{Evett and
  Spiehler}{1989}]{evett1989rule}
Evett, I.~W. and E.~J. Spiehler (1989).
\newblock Rule induction in forensic science.
\newblock In {\em Knowledge Based Systems}, pp.\  152--160. Online
  Publications.

\bibitem[\protect\citeauthoryear{Frick, Munk, and Sieling}{Frick
  et~al.}{2014}]{frick2014multiscale}
Frick, K., A.~Munk, and H.~Sieling (2014).
\newblock Multiscale change point inference.
\newblock {\em Journal of the Royal Statistical Society: Series B\/}~{\em
  76\/}(3), 495--580.

\bibitem[\protect\citeauthoryear{Friedman}{Friedman}{2004}]{friedman2004multivariate}
Friedman, J. (2004).
\newblock On multivariate goodness-of-fit and two-sample testing.
\newblock Technical report, Stanford University.

\bibitem[\protect\citeauthoryear{Fryzlewicz}{Fryzlewicz}{2014}]{fryzlewicz2014wild}
Fryzlewicz, P. (2014).
\newblock Wild binary segmentation for multiple change-point detection.
\newblock {\em The Annals of Statistics\/}~{\em 42\/}(6), 2243--2281.

\bibitem[\protect\citeauthoryear{Garreau and Arlot}{Garreau and
  Arlot}{2018}]{garreau2018consistent}
Garreau, D. and S.~Arlot (2018).
\newblock Consistent change-point detection with kernels.
\newblock {\em Electronic Journal of Statistics\/}~{\em 12\/}(2), 4440--4486.

\bibitem[\protect\citeauthoryear{Garreau, Jitkrittum, and Kanagawa}{Garreau
  et~al.}{2017}]{garreau2017large}
Garreau, D., W.~Jitkrittum, and M.~Kanagawa (2017).
\newblock Large sample analysis of the median heuristic.
\newblock {\em arXiv preprint arXiv:1707.07269\/}.

\bibitem[\protect\citeauthoryear{Hediger, Michel, and N{\"a}f}{Hediger
  et~al.}{2022}]{hediger2022use}
Hediger, S., L.~Michel, and J.~N{\"a}f (2022).
\newblock On the use of random forest for two-sample testing.
\newblock {\em Computational Statistics \& Data Analysis\/}~{\em 170}, 107435.

\bibitem[\protect\citeauthoryear{{Hotz}, {Sch{\"u}tte}, {Sieling}, {Polupanow},
  {Diederichsen}, {Steinem}, and {Munk}}{{Hotz}
  et~al.}{2013}]{hotz2013idealizing}
{Hotz}, T., O.~M. {Sch{\"u}tte}, H.~{Sieling}, T.~{Polupanow},
  U.~{Diederichsen}, C.~{Steinem}, and A.~{Munk} (2013).
\newblock Idealizing ion channel recordings by a jump segmentation
  multiresolution filter.
\newblock {\em IEEE Transactions on NanoBioscience\/}~{\em 12\/}(4), 376--386.

\bibitem[\protect\citeauthoryear{Hubert and Arabie}{Hubert and
  Arabie}{1985}]{hubert1985comparing}
Hubert, L. and P.~Arabie (1985).
\newblock Comparing partitions.
\newblock {\em Journal of Classification\/}~{\em 2\/}(1), 193--218.

\bibitem[\protect\citeauthoryear{James and Matteson}{James and
  Matteson}{2015}]{james2013ecp}
James, N. and D.~Matteson (2015).
\newblock ecp: An {R} package for nonparametric multiple change point analysis
  of multivariate data.
\newblock {\em Journal of Statistical Software\/}~{\em 62\/}(7), 1--25.

\bibitem[\protect\citeauthoryear{Kaplan, R{\"o}schke, Darkhovsky, and
  Fell}{Kaplan et~al.}{2001}]{kaplan2001macrostructural}
Kaplan, A., J.~R{\"o}schke, B.~Darkhovsky, and J.~Fell (2001).
\newblock Macrostructural {EEG} characterization based on nonparametric change
  point segmentation: {A}pplication to sleep analysis.
\newblock {\em Journal of Neuroscience Methods\/}~{\em 106\/}(1), 81--90.

\bibitem[\protect\citeauthoryear{Kaul, Jandhyala, and Fotopoulos}{Kaul
  et~al.}{2019}]{kaul2019efficient}
Kaul, A., V.~K. Jandhyala, and S.~B. Fotopoulos (2019).
\newblock An efficient two step algorithm for high dimensional change point
  regression models without grid search.
\newblock {\em Journal of Machine Learning Research\/}~{\em 20}, 1--40.

\bibitem[\protect\citeauthoryear{Killick, Fearnhead, and Eckley}{Killick
  et~al.}{2012}]{killick2012optimal}
Killick, R., P.~Fearnhead, and I.~A. Eckley (2012).
\newblock Optimal detection of changepoints with a linear computational cost.
\newblock {\em Journal of the American Statistical Association\/}~{\em
  107\/}(500), 1590--1598.

\bibitem[\protect\citeauthoryear{Kim, Morley, and Nelson}{Kim
  et~al.}{2005}]{kim2005structural}
Kim, C.-J., J.~C. Morley, and C.~R. Nelson (2005).
\newblock The structural break in the equity premium.
\newblock {\em Journal of Business \& Economic Statistics\/}~{\em 23\/}(2),
  181--191.

\bibitem[\protect\citeauthoryear{Kim and Siegmund}{Kim and
  Siegmund}{1989}]{kim1989likelihood}
Kim, H.-J. and D.~Siegmund (1989).
\newblock The likelihood ratio test for a change-point in simple linear
  regression.
\newblock {\em Biometrika\/}~{\em 76\/}(3), 409--423.

\bibitem[\protect\citeauthoryear{Koklu and Ozkan}{Koklu and
  Ozkan}{2020}]{koklu2020multiclass}
Koklu, M. and I.~A. Ozkan (2020).
\newblock Multiclass classification of dry beans using computer vision and
  machine learning techniques.
\newblock {\em Computers and Electronics in Agriculture\/}~{\em 174}, 105507.

\bibitem[\protect\citeauthoryear{Kov{\'a}cs, Li, and B{\"u}hlmann}{Kov{\'a}cs
  et~al.}{2020}]{kovacs2020discussion}
Kov{\'a}cs, S., H.~Li, and P.~B{\"u}hlmann (2020).
\newblock Seeded intervals and noise level estimation in change point
  detection: {A} discussion of {Fryzlewicz (2020)}.
\newblock {\em Journal of the Korean Statistical Society\/}~{\em 49\/}(4),
  1081--1089.

\bibitem[\protect\citeauthoryear{Kov{\'a}cs, Li, Haubner, Munk, and
  B{\"u}hlmann}{Kov{\'a}cs et~al.}{2020}]{kovacs2020optimistic}
Kov{\'a}cs, S., H.~Li, L.~Haubner, A.~Munk, and P.~B{\"u}hlmann (2020).
\newblock Optimistic search: Change point estimation for large-scale data via
  adaptive logarithmic queries.
\newblock {\em arXiv:2010.10194\/}.

\bibitem[\protect\citeauthoryear{Kovács, Bühlmann, Li, and Munk}{Kovács
  et~al.}{2023}]{kovacs2020seeded}
Kovács, S., P.~Bühlmann, H.~Li, and A.~Munk (2023).
\newblock Seeded binary segmentation: a general methodology for fast and
  optimal changepoint detection.
\newblock {\em Biometrika\/}~{\em 110\/}(1), 249--256.

\bibitem[\protect\citeauthoryear{Kudraszow and Vieu}{Kudraszow and
  Vieu}{2013}]{kudraszow2013uniform}
Kudraszow, N.~L. and P.~Vieu (2013).
\newblock Uniform consistency of {kNN} regressors for functional variables.
\newblock {\em Statistics \& Probability Letters\/}~{\em 83\/}(8), 1863--1870.

\bibitem[\protect\citeauthoryear{Liu, Yamada, Collier, and Sugiyama}{Liu
  et~al.}{2013}]{liu2013change}
Liu, S., M.~Yamada, N.~Collier, and M.~Sugiyama (2013).
\newblock Change-point detection in time-series data by relative density-ratio
  estimation.
\newblock {\em Neural Networks\/}~{\em 43}, 72--83.

\bibitem[\protect\citeauthoryear{Londschien, Kov{\'a}cs, and
  B{\"u}hlmann}{Londschien et~al.}{2021}]{londschien2021change}
Londschien, M., S.~Kov{\'a}cs, and P.~B{\"u}hlmann (2021).
\newblock Change-point detection for graphical models in the presence of
  missing values.
\newblock {\em Journal of Computational and Graphical Statistics\/}~{\em
  30\/}(3), 768--779.

\bibitem[\protect\citeauthoryear{Lopez-Paz and Oquab}{Lopez-Paz and
  Oquab}{2017}]{lopez2016revisiting}
Lopez-Paz, D. and M.~Oquab (2017).
\newblock Revisiting classifier two-sample tests.
\newblock In {\em International Conference on Learning Representations}.

\bibitem[\protect\citeauthoryear{Lung-Yut-Fong, L{\'e}vy-Leduc, and
  Capp{\'e}}{Lung-Yut-Fong et~al.}{2015}]{lung2015homogeneity}
Lung-Yut-Fong, A., C.~L{\'e}vy-Leduc, and O.~Capp{\'e} (2015).
\newblock Homogeneity and change-point detection tests for multivariate data
  using rank statistics.
\newblock {\em Journal de la Soci{\'e}t{\'e} Fran{\c{c}}aise de
  Statistique\/}~{\em 156\/}(4), 133--162.

\bibitem[\protect\citeauthoryear{{Madrid--Padilla}, Yu, Wang, and
  Rinaldo}{{Madrid--Padilla} et~al.}{2021a}]{padilla2019optimal}
{Madrid--Padilla}, O., Y.~Yu, D.~Wang, and A.~Rinaldo (2021a).
\newblock {Optimal nonparametric change point analysis}.
\newblock {\em Electronic Journal of Statistics\/}~{\em 15\/}(1), 1154--1201.

\bibitem[\protect\citeauthoryear{{Madrid--Padilla}, Yu, Wang, and
  Rinaldo}{{Madrid--Padilla} et~al.}{2021b}]{padilla2021optimal}
{Madrid--Padilla}, O., Y.~Yu, D.~Wang, and A.~Rinaldo (2021b).
\newblock Optimal nonparametric multivariate change point detection and
  localization.
\newblock {\em IEEE Transactions on Information Theory\/}~{\em 68\/}(3),
  1922--1944.

\bibitem[\protect\citeauthoryear{Matsakis and Klock}{Matsakis and
  Klock}{2014}]{matsakis2014rust}
Matsakis, N.~D. and F.~S. Klock (2014).
\newblock The rust language.
\newblock In {\em ACM SIGAda Ada Letters}, Volume~34, pp.\  103--104. ACM.

\bibitem[\protect\citeauthoryear{Matteson and James}{Matteson and
  James}{2014}]{matteson2014nonparametric}
Matteson, D.~S. and N.~A. James (2014).
\newblock A nonparametric approach for multiple change point analysis of
  multivariate data.
\newblock {\em Journal of the American Statistical Association\/}~{\em
  109\/}(505), 334--345.

\bibitem[\protect\citeauthoryear{McCulloch and Pitts}{McCulloch and
  Pitts}{1943}]{mcculloch1943logical}
McCulloch, W.~S. and W.~Pitts (1943).
\newblock A logical calculus of the ideas immanent in nervous activity.
\newblock {\em The Bulletin of Mathematical Biophysics\/}~{\em 5\/}(4),
  115--133.

\bibitem[\protect\citeauthoryear{Olshen, Venkatraman, Lucito, and
  Wigler}{Olshen et~al.}{2004}]{olshen2004circular}
Olshen, A.~B., E.~S. Venkatraman, R.~Lucito, and M.~Wigler (2004).
\newblock Circular binary segmentation for the analysis of array-based {DNA}
  copy number data.
\newblock {\em Biostatistics\/}~{\em 5\/}(4), 557--572.

\bibitem[\protect\citeauthoryear{Page}{Page}{1954}]{page1954continuous}
Page, E.~S. (1954).
\newblock Continuous inspection schemes.
\newblock {\em Biometrika\/}~{\em 41\/}(1/2), 100--115.

\bibitem[\protect\citeauthoryear{Page}{Page}{1955}]{page1955test}
Page, E.~S. (1955).
\newblock A test for a change in a parameter occurring at an unknown point.
\newblock {\em Biometrika\/}~{\em 42\/}(3/4), 523--527.

\bibitem[\protect\citeauthoryear{Pein, Sieling, and Munk}{Pein
  et~al.}{2017}]{pein2017heterogeneous}
Pein, F., H.~Sieling, and A.~Munk (2017).
\newblock Heterogeneous change point inference.
\newblock {\em Journal of the Royal Statistical Society: Series B\/}~{\em
  79\/}(4), 1207--1227.

\bibitem[\protect\citeauthoryear{Pettitt}{Pettitt}{1979}]{pettitt1979non}
Pettitt, A.~N. (1979).
\newblock A non-parametric approach to the change-point problem.
\newblock {\em Journal of the Royal Statistical Society: Series C\/}~{\em
  28\/}(2), 126--135.

\bibitem[\protect\citeauthoryear{Picard, Robin, Lavielle, Vaisse, and
  Daudin}{Picard et~al.}{2005}]{picard2005statistical}
Picard, F., S.~Robin, M.~Lavielle, C.~Vaisse, and J.-J. Daudin (2005).
\newblock A statistical approach for array {CGH} data analysis.
\newblock {\em BMC Bioinformatics\/}~{\em 6\/}(1), 1--14.

\bibitem[\protect\citeauthoryear{Rand}{Rand}{1971}]{rand1971objective}
Rand, W.~M. (1971).
\newblock Objective criteria for the evaluation of clustering methods.
\newblock {\em Journal of the American Statistical Association\/}~{\em
  66\/}(336), 846--850.

\bibitem[\protect\citeauthoryear{{Reeves}, {Chen}, {Wang}, {Lund}, and
  {Lu}}{{Reeves} et~al.}{2007}]{reeves2007review}
{Reeves}, J., J.~{Chen}, X.~L. {Wang}, R.~{Lund}, and Q.~Q. {Lu} (2007).
\newblock {A review and comparison of changepoint detection techniques for
  climate data}.
\newblock {\em Journal of Applied Meteorology and Climatology\/}~{\em 46\/}(6),
  900--915.

\bibitem[\protect\citeauthoryear{Roy, Atchad{\'e}, and Michailidis}{Roy
  et~al.}{2017}]{roy2017change}
Roy, S., Y.~Atchad{\'e}, and G.~Michailidis (2017).
\newblock Change point estimation in high dimensional {Markov} random-field
  models.
\newblock {\em Journal of the Royal Statistical Society: Series B\/}~{\em
  79\/}(4), 1187--1206.

\bibitem[\protect\citeauthoryear{Shen and Zhang}{Shen and
  Zhang}{2012}]{shen2012change}
Shen, J.~J. and N.~R. Zhang (2012).
\newblock Change-point model on nonhomogeneous {Poisson} processes with
  application in copy number profiling by next-generation {DNA} sequencing.
\newblock {\em The Annals of Applied Statistics\/}~{\em 6\/}(2), 476--496.

\bibitem[\protect\citeauthoryear{Street, Wolberg, and Mangasarian}{Street
  et~al.}{1993}]{street1993nuclear}
Street, N., W.~H. Wolberg, and O.~L. Mangasarian (1993).
\newblock Nuclear feature extraction for breast tumor diagnosis.
\newblock In {\em Biomedical Image Processing and Biomedical Visualization},
  Volume 1905. SPIE.

\bibitem[\protect\citeauthoryear{Truong, Oudre, and Vayatis}{Truong
  et~al.}{2020}]{truong2020selective}
Truong, C., L.~Oudre, and N.~Vayatis (2020).
\newblock Selective review of offline change point detection methods.
\newblock {\em Signal Processing\/}~{\em 167}, 107299.

\bibitem[\protect\citeauthoryear{Vostrikova}{Vostrikova}{1981}]{vostrikova1981detecting}
Vostrikova, L.~Y. (1981).
\newblock Detecting 'disorder' in multidimensional random processes.
\newblock {\em Soviet Mathematics Doklady\/}~{\em 24}, 55--59.

\bibitem[\protect\citeauthoryear{Wang, Yu, and Rinaldo}{Wang
  et~al.}{2021}]{wang2021optimal}
Wang, D., Y.~Yu, and A.~Rinaldo (2021).
\newblock Optimal covariance change point localization in high dimensions.
\newblock {\em Bernoulli\/}~{\em 27\/}(1), 554--575.

\bibitem[\protect\citeauthoryear{Wang and Samworth}{Wang and
  Samworth}{2018}]{wang2018high}
Wang, T. and R.~J. Samworth (2018).
\newblock High dimensional change point estimation via sparse projection.
\newblock {\em Journal of the Royal Statistical Society: Series B\/}~{\em
  80\/}(1), 57--83.

\bibitem[\protect\citeauthoryear{Waugh}{Waugh}{1995}]{waugh1995extending}
Waugh, S.~G. (1995).
\newblock {\em Extending and benchmarking Cascade-Correlation: extensions to
  the Cascade-Correlation architecture and benchmarking of feed-forward
  supervised artificial neural networks}.
\newblock Ph.\ D. thesis, University of Tasmania.

\bibitem[\protect\citeauthoryear{Yao}{Yao}{1988}]{yao1988estimating}
Yao, Y.-C. (1988).
\newblock Estimating the number of change-points via {Schwarz'} criterion.
\newblock {\em Statistics \& Probability Letters\/}~{\em 6\/}(3), 181--189.

\bibitem[\protect\citeauthoryear{Zhang and Chen}{Zhang and
  Chen}{2021}]{zhang2021graph}
Zhang, Y. and H.~Chen (2021).
\newblock Graph-based multiple change-point detection.
\newblock {\em arXiv:2110.01170\/}.

\bibitem[\protect\citeauthoryear{Zou, Yin, Feng, and Wang}{Zou
  et~al.}{2014}]{zou2014nonparametric}
Zou, C., G.~Yin, L.~Feng, and Z.~Wang (2014).
\newblock Nonparametric maximum likelihood approach to multiple change-point
  problems.
\newblock {\em The Annals of Statistics\/}~{\em 42\/}(3), 970--1002.

\end{thebibliography}

\appendix
\clearpage
\section{Proofs}
\label{appendix:proofs}

\lemmaoptimalloglikelihood*
\begin{proof}
    Let $u<v$ and define $\kappa(\uv) \coloneqq \{k \in\{1, \ldots, K\} \mid \seg{\alpha^0_{k \shortminus 1}}{\alpha^0_k} \cap \uv \neq \emptyset\}$.
    If there exists some $k$ such that $\alpha_{k \shortminus 1}^0 \leq u < v \leq \alpha_k^0$, then $\kappa(\uv) = \{ k \}$ and $p_\uv = p_\seg{\alpha_{k \shortminus 1}^0}{\alpha_k^0} = p_k$. Consequently, as the $X_{u+1}, \ldots, X_{v-1}$ are $\BP_k$-distributed, we have
    \begin{equation*}
        \label{eq_appendix:dkl1}
        \BE\left[ \sum_{i = u+1}^v \log(\frac{p_\uv(X_i)}{p_\seg{0}{n}(X_i)}) \right]
        = (v - u) \DKL{\BP_k}{\BP_\seg{0}{n}}.
    \end{equation*}
    This implies that, for any segmentation $\alpha$ containing $\alpha^0$ and thus $|\kappa(\segk)| = 1$ for all $k = 1, \ldots, K$,
    \begin{equation*}
    \BE\left[ G\left( (X_i)_{i = 1}^n \mid \alpha, p^\infty \right)\right]
    = \sum_{k = 1}^K (\alpha^0_k - \alpha^0_{k \shortminus 1}) \DKL{\BP_k}{\BP_\seg{0}{n}}.
    \end{equation*}
    Now assume that $\kappa(\uv)$ contains more than a single element.
    Set $\pi_k(\uv) \coloneqq \frac{|\seg{\alpha^0_{k \shortminus 1}}{\alpha^0_k} \cap \uv|}{v - u}$.
    Then $\BP_\uv = \sum_{k \in\kappa(\uv)} \pi_k(\uv) \BP_k$ and
    \begin{align*}
        \BE\left[ \sum_{i = u+1}^v \log(\frac{p_\uv(X_i)}{p_\seg{0}{n}(X_i)}) \right] &= 
        (v - u)\BE_{X \sim \BP_\uv}
        \left[ \log(\frac{p_\uv(X)}{p_\seg{0}{n}(X)}) \right] \\
        &= (v - u) %
        \BE_{X \sim \BP_\uv}
        \left[ \log(
            \frac{\sum_{k \in \kappa(\uv)} \pi_k(\uv) p_k(X)}{p_\seg{0}{n}(X)}
        ) \right] \\
        &< (v - u) \sum_{k \in \kappa(\uv)} \pi_k(\uv) \BE_{X \sim \BP_k}\left[ \log(\frac{p_k(X)}{p_\seg{0}{n}(X)})   \right]\\
        &= (v - u) \sum_{k \in \kappa(\uv)} \pi_k(\uv) \DKL{\BP_k}{\BP_\seg{0}{n}}\, .
    \end{align*}
    In the third line, we used the fact that $x \mapsto x\log(x)$ is strictly convex.
    Consider any segmentation $\alpha$ that does not contain $\alpha^0$.
    Then there exists at least one segment $\seg{\alpha_{k'\shortminus 1}}{\alpha_{k'}}$ such that $\kappa(\seg{\alpha_{k'\shortminus 1}}{\alpha_{k'}})$ contains more than one element and thus
    \begin{align*}
        \BE \left[ G \left( (X_i)_{i = 1}^n \mid \alpha, p^\infty \right) \right]
        &=
        \sum_{k=1}^K \BE \left[ \sum_{i = \alpha_{k\shortminus 1}+1}^{\alpha_k} \log(\frac{p_\segk(X_i)}{p_\seg{0}{n}(X_i)}) \right] \\
        &< \sum_{k=1}^K (\alpha_k - \alpha_{k \shortminus 1}) \sum_{l \in \kappa(\segk)} \pi_l(\segk) \DKL{\BP_l}{\BP_\seg{0}{n}} \\
        &= \sum_{k = 1}^K (\alpha^0_k - \alpha^0_{k \shortminus 1}) \DKL{\BP_k}{\BP_\seg{0}{n}}\, ,
    \end{align*}
    applying the above inequality with $u=\alpha_{k' \shortminus 1}$ and $v=\alpha_{k'}$.
\end{proof}

\lemmaconvexity*
\begin{proof}
    Let $s \in \seg{u}{v-1} \backslash \alpha^0$.
    We show that the expected classifier log-likelihood ratio
    \begin{equation*}
        G(s) := \BE \left[ G \left( (X_i)_{i = u+1}^v \mid \{u, s, v\}, p^\infty \right) \right]
    \end{equation*}
    is convex at $s$, that is, $G(s+1) - 2 G(s) + G(s-1) > 0$. By definition
    \begin{equation*}
        G(s) = \BE \left[ \sum_{i = u+1}^s \log(\frac{p_\us(X_i)}{p_\uv(X_i)}) + \sum_{i = s+ 1}^v \log(\frac{p_\sv(X_i)}{p_\uv(X_i)}) \right]
    \end{equation*}
    and
    \begin{align*}
        G(s+1) - G(s)
        &=
        \BE \left[
            \sum_{i = u + 1}^{s+1} \log(\frac{p_\seg{u}{s+1}(X_i)}{p_\us(X_i)}) +
            \sum_{i = s + 2}^v \log(\frac{p_\seg{s+1}{v}(X_i)}{p_\sv(X_i)}) +
            \log(\frac{p_\us(X_{s+1})}{p_\sv(X_{s+1})})
        \right] \\
        &= (s - u + 1) \DKL{\BP_\seg{u}{s+1}}{\BP_\us} +
        (v - s - 1) \DKL{\BP_\seg{s+1}{v}}{\BP_\sv}\ + \\
        &\hphantom{\hspace{7cm}} \BE \left[ \log(\frac{p_\us(X_{s+1})}{p_\sv(X_{s+1})}) \right]
    \end{align*}
    Similarly,
    \begin{align*}
        G(s-1) - G(s)
        &=
        \BE \left[
            \sum_{i = u + 1}^{s - 1} \log(\frac{p_\seg{u}{s-1}(X_i)}{p_\us(X_i)}) +
            \sum_{i = s}^v \log(\frac{p_\seg{s-1}{v}(X_i)}{p_\sv(X_i)}) +
            \log(\frac{p_\sv(X_{s})}{p_\us(X_{s})})
        \right] \\
        &= (s - u - 1) \DKL{\BP_\seg{u}{s-1}}{\BP_\us} +
        (v - s + 1) \DKL{\BP_\seg{s-1}{v}}{\BP_\sv} \ + \\
        &\hphantom{\hspace{7cm}} \BE \left[ \log(\frac{p_\sv(X_s)}{p_\us(X_s)}) \right] 
    \end{align*}
    As $s\notin \alpha^0$, the $X_s, X_{s+1}$ are i.i.d.\ and thus
    \begin{equation*}
        \BE \left[ \log(\frac{p_\us(X_{s+1})}{p_\sv(X_{s+1})}) +
         \log(\frac{p_\sv(X_s)}{p_\us(X_s)}) \right]
        =
        \BE \left[ \log(\frac{p_\us(X_{s+1})}{p_\us(X_{s})})  +
        \log(\frac{p_\sv(X_s)}{p_\sv(X_{s+1})}) \right]
        = 0.
    \end{equation*}
    Combining everything,
    \begin{align*}
        G(s + 1) - &2 G(s) + G(s - 1) = (G(s+1) - G(s)) + (G(s-1) - G(s)) \\
        &= (s - u + 1) \DKL{\BP_\seg{u}{s+1}}{\BP_\us}\ +
        (v - s - 1) \DKL{\BP_\seg{s+1}{v}}{\BP_\sv}\ + \\
        &\hphantom{\hspace{1cm} } (s - u - 1) \DKL{\BP_\seg{u}{s-1}}{\BP_\us}\ +
        (v - s + 1) \DKL{\BP_\seg{s-1}{v}}{\BP_\sv}.
    \end{align*}
    This is nonnegative, as the Kullback-Leibler divergence between two distributions is nonnegative and as $u+1 \leq s \leq v-1$.
    Furthermore, as $s - u + 1 \geq 1$ and $v - s + 1 \geq 1$, this term is positive if $\BP_\seg{u}{s+1} \neq \BP_\us$ or $\BP_\seg{s - 1}{v} \neq \BP_\sv$.
    Choose $k$ such that $s \in \seg{\alpha_{k \shortminus 1}^0}{\alpha_k^0}$.    
    Then, as $s\notin \alpha^0$, also $s + 1 \in \seg{\alpha_{k \shortminus 1}^0}{\alpha_k^0}$.
    Then $\BP_\seg{u}{s+1} \neq \BP_\us$ or $\BP_\seg{s - 1}{v} \neq \BP_\sv$ if and only if $\BP_\seg{u}{\alpha_{k \shortminus 1}^0} \neq \BP_\seg{\alpha^0_{k\shortminus 1}}{\alpha^0_k}$ or $\BP_\seg{\alpha^0_{k\shortminus 1}}{\alpha^0_k} \neq \BP_\seg{\alpha_k^0}{v}$.
\end{proof}

\lemmapiecewiselinear*
\begin{proof}
Write
\begin{align*}
    G(s) &\coloneqq \sum_{i = u + 1}^{s} \log( \frac{v - u}{s^{(0)} - u} p_{\{u, s^{(0)}, v\}}^\infty(X_i)_1) + \sum_{i = s + 1}^v \log(\frac{v - u}{v - s^{(0)}} p_{\{u, s^{(0)}, v\}}^\infty(X_i)_2) \\
    &= \sum_{i=u+1}^s \log( \frac{p_\seg{u}{s^{(0)}}(X_i)}{p_\uv(X_i)} ) + \sum_{i=s+1}^v \log( \frac{p_\seg{s^{(0)}}{v}(X_i)}{p_\uv(X_i)} )
\end{align*}
and define
\begin{equation*}
    U_j \coloneqq G(j) - G(j - 1) = 
    \log( \frac{p_\seg{u}{s^{(0)}}(X_j)}{p_\seg{s^{(0)}}{v}(X_j)}),
\end{equation*}
such that $G(s) = G(u) + \sum_{j = u+1}^s U_j$.
For any segment $\seg{\alpha_{k \shortminus 1}^0}{\alpha_k}$ intersecting the segment $\uv$ and any $s \in \seg{\alpha_{k \shortminus 1}^0}{\alpha_k}$, the $U_{\alpha_{k \shortminus 1}^0 + 1}, \ldots, U_{\alpha_k}$ are i.i.d.
Consequently, $\BE[G(s)] = \BE[G(\alpha_{k \shortminus 1}^0)] + (s - \alpha_{k \shortminus 1}^0) \BE[U_{\alpha_{k \shortminus 1}^0 + 1}]$, proving that $\BE[G(s)]$ is piecewise linear between the change points in $\alpha^0$.

Now assume that there is a single change point $a^0$ in $\uv$.
Assume without loss of generality that $s^{(0)} \geq a^0$ such that $p_\seg{u}{s^{(0)}} = \frac{a^0 - u}{s^{(0)} - u} p_\seg{u}{a^0} + \frac{s^{(0)} - a^0}{s^{(0)} - u}p_\seg{a^0}{v}$ and $p_\seg{s^{(0)}}{v} = p_\seg{a^0}{v}$.
Then, for $j \leq a^0$,
\begin{align*}
    \BE[U_j] &= \BE_{\BP_\seg{u}{a^0}}
    \left[ \log(
    \frac{p_\seg{u}{s^{(0)}}(X_j)}{p_\seg{s^{(0)}}{v}(X_j)}
    ) \right]\\
    &= \frac{s^{(0)} - u}{a^0 - u}
    \BE_{\BP_\seg{u}{s^{(0)}}}\left[
    \log( \frac{p_\seg{u}{s^{(0)}}(X_j)}{p_\seg{s^{(0)}}{v}(X_j)}) 
    \right] - 
    \frac{s^{(0)}-a^0}{a^0-u}
    \BE_{\BP_\seg{a^0}{v}}    \left[ \log(
    \frac{p_\seg{u}{s^{(0)}}(X_j)}{p_\seg{s^{(0)}}{v}(X_j)}
    )\right]\\
    &= \frac{s^{(0)} - u}{a^0 - u} \DKL{\BP_\seg{u}{s^{(0)}}}{\BP_\seg{a^0}{v}} + 
    \frac{s^{(0)} - a^0}{a^0 - u} \DKL{\BP_\seg{s^{(0)}}{v}}{\BP_\seg{u}{s^{(0)}}} > 0.
\end{align*}
Similarly, for $j > a^0$
\begin{equation*}
    \BE[U_s] = \BE_{\BP_\seg{s^{(0)}}{v}}
    \left[ \log(
    \frac{p_\seg{u}{s^{(0)}}(X_j)}{p_\seg{s^{(0)}}{v}(X_j)}
    ) \right] = -  \DKL{\BP_{(a^{(0)}, v]}}{\BP_\seg{u}{s^{(0)}}} < 0,
\end{equation*}
which shows that $s \mapsto \BE[G(s)] = \BE[G(u)] + \sum_{i=u+1}^s \BE[U_i]$ has a unique maximum at $s = a^0$.
\end{proof}

\begin{restatable}{lemma_}{kl} \label{lem:kl}
    Let $\BP_1, \BP_2$ be probability measures with corresponding densities $p_1, p_2$.
    Let $\delta_1, \delta_2 \in [0,1]$ with $\delta_1 \neq \delta_2$.
    Define $\BQ(\delta) \coloneqq (1 - \delta)\BP_1 + \delta \BP_2$ with density $q(\delta) = (1 - \delta)p_1 + \delta p_2$.
    Then
    \begin{align}
        \BE_{\BP_1}\left[\log( \frac{q(\delta_1)}{q(\delta_2)})\right] =\ &
        \frac{\delta_2}{\delta_2 - \delta_1} \DKL{\BQ(\delta_1)}{\BQ(\delta_2)}\ +\\
        &\frac{\delta_1}{\delta_2 - \delta_1} \DKL{\BQ(\delta_2)}{\BQ(\delta_1)}.
    \end{align}
\end{restatable}

\begin{proof}
    The Kullback Leibler divergence between distribution $Q_1, Q_2$ with densities $q_1, q_2$ is $\DKL{Q_1}{Q_2} = \BE_{Q_1}[\log(\frac{q_1(X)}{q_2(X)})]$. Thus
    \begin{equation*}
        \DKL{\BQ(\delta_1)}{\BQ(\delta_2)} = (1 - \delta_1) \BE_{\BP_1}\left[\log( \frac{q(\delta_1)}{q(\delta_2)})\right] + \delta_1 \BE_{\BP_2}\left[\log( \frac{q(\delta_1)}{q(\delta_2)})\right]
    \end{equation*}
    and
    \begin{equation*}
        \DKL{\BQ(\delta_2)}{\BQ(\delta_1)} =
        (1 -\delta_2) \BE_{\BP_1}\left[\log( \frac{q(\delta_2)}{q(\delta_1)})\right] +
        \delta_2 \BE_{\BP_2}\left[\log( \frac{q(\delta_2)}{q(\delta_1)})\right]
    \end{equation*}
    such that
    \begin{multline*}
       \frac{\delta_2}{\delta_2 - \delta_1} \DKL{\BQ(\delta_1)}{\BQ(\delta_2)} +
        \frac{\delta_1}{\delta_2 - \delta_1} \DKL{\BQ(\delta_2)}{\BQ(\delta_1)} \\
        = \left(\frac{\delta_2}{\delta_2 - \delta_1} (1 - \delta_1) - \frac{\delta_1}{\delta_2 - \delta_1}(1 - \delta_2)\right) \BE_{\BP_1}\left[\log( \frac{q(\delta_1)}{q(\delta_2)})\right] +\\
        \left(\frac{\delta_2}{\delta_2 - \delta_1}\delta_1 - \frac{\delta_1}{\delta_2 - \delta_1}\delta_2 \right) \BE_{\BP_2}\left[\log( \frac{q(\delta_1)}{q(\delta_2)})\right]
        = \BE_{\BP_1}\left[ \log(\frac{q(\delta_1)}{q(\delta_2)})\right].
    \end{multline*}
\end{proof}

\begin{lemma}
    \label{lem:union_bound}
    Let $Z_1, \ldots, Z_n$ be independent random variables bounded from below by $a$ and from above by $b$. Let $1 \leq K \leq n$. Then,
    $$
    \BP \left[ \bigcup_{k=1, \ldots, K} \left\{ \sum_{i=1}^k (Z_i - \BE[Z_i]) \geq (b - a) \sqrt{k \log(n)} \right\} \right] \leq \frac{K}{n^2}.
    $$
\end{lemma}

\begin{proof}
    Define $\tilde Z_i := Z_i - \BE[Z_i]$ such that $a_i := a - \BE[Z_i] \leq Z_i \leq b - \BE[Z_i] := b_i$, with $b_i - a_i = b - a$ for all $i = 1, \ldots, n$.
    By Hoeffding's theorem, for each $k = 1, \ldots, K$ and $\lambda > 0$, we have
    $$
    \BP \left[ \sum_{i=1}^k \tilde Z_i \geq \lambda \right] \leq \exp( - \frac{2 \lambda^2}{k (b_i - a_i)^2}).
    $$
    Setting $\lambda = (b - a) \sqrt{k \log(n)}$ and taking a union bound yields
    $$
    \BP \left[ \bigcup_{k=1, \ldots, K} \left\{ \sum_{i=1}^k (Z_i - \BE[Z_i]) \geq (b - a) \sqrt{k \log(n) } \right\} \right] \leq K \exp( - \frac{2 k (b - a)^2 \log(n)}{k (b - a)^2}) = \frac{K}{n^2}.
    $$    
\end{proof}

\finitesample*

\begin{proof}
    Recall that $\log_\eta(x) = \log(\eta + (1 - \eta) x)$ and write
    $$
    G^{s, \eta}(t, p) = \sum_{i = 1}^t \log_\eta\left(\frac{n}{s} p_s(X_i)_1\right) + \sum_{i = t+1}^n \log_\eta\left(\frac{n}{n-s} p_s(X_i)_2\right)
    $$
    for $t = 0, \ldots, n$ and any classifier $p$. Recall that $p^\infty$ is the Bayes classifier with $p_s^\infty = \left(\frac{s}{n} \frac{p_{(0, s]}}{p_{(0, n]}}, \frac{n-s}{n} \frac{p_{(s, n]}}{p_{(0, n]}}\right)^T$.
    We proceed in four steps. \\
    
    \noindent (i) First, we show that, for any $t = 1, \ldots, n$,
    $$
    \BE[ G^{s, \eta}(a^0, p^\infty) - G^{s, \eta}(t, p^\infty)] \geq  2 |t - a^0| \delta (1 - \eta)^2 \dTV{\BP_1}{\BP_2}^2.
    $$
    
    \noindent (ii) Second, we show that
    \begin{multline*}
    \BP\Bigg[ \bigcap_{t = 1, \ldots n} \bigg\{ \left[G^{s, \eta}(a^0, p^\infty) - G^{s, \eta}(t, p^\infty)\right] - \BE \left[G^{s, \eta}(a^0, p^\infty) - G^{s, \eta}(t, p^\infty)\right] \\ > - 2 \log(\frac{1}{\rho (1 - \rho)} \frac{1}{\eta^2}) \sqrt{|t - a^0| \log(n)} \bigg\} \Bigg] \geq 1 - \frac{1}{n} \overset{n \to\infty}{\longrightarrow} 1.
    \end{multline*}

    \noindent (iii) Next, we argue by $\frac{1}{\eta}$-Lipschitz continuity of $\log_\eta$ and consistency of $\hat p$ that, for any $\varepsilon >0$,
    \begin{equation*}
        \BP \left[ \bigcap_{t = 1, \ldots, n} \Big\{ [G^{s, \eta}(a^0, \hat p) - G^{s, \eta}(t, \hat p)] - [G^{s, \eta}(a^0, p^\infty) - G^{s, \eta}(t, p^\infty)] >- |t - a^0| \varepsilon  \Big\} \right] \overset{n \to \infty}{\longrightarrow} 1.
    \end{equation*}
    
    \noindent (iv) Combined, these steps imply that, if $|t - a^0| \geq \frac{C \log(n)}{\delta^2 \dTV{\BP_1}{\BP_2}^4}$, then, with high probability, $G^{s, \eta}(a^0, \hat p) > G^{s, \eta}(t, \hat p)$ and thus $|\hat a - a^0| < \frac{C \log(n)}{\delta^2 \dTV{\BP_1}{\BP_2}^4}$.
    
    \paragraph*{Step 1:}
    For $i = 1, \ldots, n$, define
    $$
    U^{s, \eta}(p)_i := G^{s, \eta}(i, p) - G^{s, \eta}(i-1, p) = \log_\eta \left(\frac{n}{s} p_s(X_i)_1 \right) - \log_\eta \left(\frac{n}{n-s} p_s(X_i)_2 \right).
    $$
    We show that
    $$
    | \BE[ U^{s, \eta}(p^\infty)_i] |  \geq 2 (1 - \eta)\delta d_\textrm{TV}(\BP_1, \BP_2)^2,
    $$
    with a positive sign for $i \leq a^0$ and a negative sign for $i > a^0$.
    
    By symmetry, we can assume, without loss of generality, that $s \geq a^0$.
    Then, $\BP_{(0, s]} = \frac{a^0}{s} \BP_1 + \frac{s - a^0}{s} \BP_2$ and $\BP_{(s, n]} = \BP_2$ with densities $p_{(0, s]} = \frac{a^0}{s} p_1 + \frac{s - a^0}{s} p_2$ and $p_{(s, n]} = p_2$.
    Furthermore, $\BP_{(0, n]} = \frac{a^0}{n} \BP_1 + \frac{n - a^0}{n} \BP_2$ with density $p_{(0, n]} = \frac{a^0}{n} p_1 + \frac{n - a^0}{n} p_2$.
    For each $i = 1, \ldots n$, we have
    
    \begin{align*}
    U^{s, \eta}(p^\infty)_i &= \log \left(
        \eta + (1 - \eta) \frac{p_{(0, s]}(X_i)}{p_{(0, n]}(X_i)}
    \right) - \log \left(
        \eta + (1 - \eta) \frac{p_{(s, n]}(X_i)}{p_{(0, n]}(X_i)}
    \right) \\
    &= \log \left(
        \frac{\eta p_{(0, n]}(X_i) + (1 - \eta) p_{(0, s]}(X_i)}{p_{(0, n]}(X_i)}
    \right) - \log \left(
        \frac{\eta p_{(0, n]}(X_i) + (1 - \eta) p_{(s, n]}(X_i)}{p_{(0, n]}(X_i)}
    \right) \\
    &= \log \left(
        \frac
        {
            \eta(\frac{a^0}{n} p_1(X_i) + \frac{n - a^0}{n} p_2(X_i)) + (1 - \eta) (\frac{a^0}{s} p_1 + \frac{s - a^0}{s} p_2)(X_i)
        }{
            \eta(\frac{a^0}{n} p_1(X_i) + \frac{n - a^0}{n} p_2(X_i)) + (1 - \eta) p_2(X_i)
        }
    \right) \\
    &= \log \left(
        \frac
        {\frac{a^0}{s}(1 - \eta \frac{n - s}{n}) p_1(X_i) + (1 - \frac{a^0}{s}(1 - \eta \frac{n - s}{n}))p_2(X_i)}
        {\eta \frac{a^0}{n} p_1(X_i) + (1 - \eta \frac{a^0}{n}) p_2(X_i)}
    \right).
    \end{align*}
    Let $\BQ(\delta) = (1 - \delta) \BP_1 + \delta \BP_2$.
    Then, by Lemma \ref{lem:kl} with $\delta_1 = 1 - \frac{a^0}{s}(1 - \eta \frac{n - s}{n})$, $\delta_2 = 1 - \eta \frac{a^0}{n}$, and thus $\delta_2 - \delta_1 = (1 - \eta) \frac{a^0}{s}$,
    
    \begin{align*}
    \BE_{X_i \sim \BP_1}[U^{s, \eta}(p^\infty)_i]
    = \frac{1 - \eta \frac{a^0}{n}}{(1 - \eta) \frac{a^0}{s}} \DKL{\BQ(\delta_1)}{\BQ(\delta_2)} + \frac{1 - \frac{a^0}{s}(1 - \eta \frac{n-s}{n})}{(1 - \eta) \frac{a^0}{s}} \DKL{\BQ(\delta_2)}{\BQ(\delta_1)}.
    \end{align*}
    By Pinsker's inequality, $\DKL{\BQ(\delta_1)}{\BQ(\delta_2)} \geq 2 \dTV{\BQ(\delta_1)}{\BQ(\delta_2)}^2$.
    Furthermore, it holds that $ \dTV{\BQ(\delta_1)}{\BQ(\delta_2)} = | \delta_2 - \delta_1 | \dTV{\BP_1}{\BP_2}$.
    Thus,
    
    \begin{align*}
        \BE_{X_i \sim \BP_1}[U^{s, \eta}(p^\infty)_i]
        &\geq 2 (1 - \eta) \frac{a^0}{s} \left( 1 - \eta \frac{a^0}{n} + 1 - \frac{a^0}{s}(1 - \eta \frac{n-s}{n})\right) \dTV{\BP_1}{\BP_2}^2\\
        &= 2 (1 - \eta) \frac{a^0}{s} \left( 2 - \frac{a^0}{s} (1 - \eta \frac{n-2 s}{n} )\right) \dTV{\BP_1}{\BP_2}^2\\
        &\geq 2 (1 - \eta)^2 \delta \dTV{\BP_1}{\BP_2}^2.
    \end{align*}
    Similarly,
    \begin{align*}
        - \BE_{X_i \sim \BP_2}[U^{s, \eta}(p^\infty)_i] &\geq 2 (1 - \eta)^2 \delta \dTV{\BP_1}{\BP_2}^2,
    \end{align*}
    which proves the claim.

    \paragraph*{Step 2:}
    Note that $\log_\eta(x)$ is bounded from below by $\log(\eta)$ and from above by $\log(x)$ for all $x \geq 1$.
    Consequently, the $U^{s, \eta}(p)_i = \log_\eta(\frac{n}{s}p_s(X_i)_1) - \log_\eta(\frac{n}{n - s}p_s(X_i)_2)$ are bounded from below by $a = - \log(\frac{1}{1 - \rho} \frac{1}{\eta})$ and from above by $b = \log(\frac{1}{\rho}\frac{1}{\eta})$ with $b - a = \log(\frac{1}{\eta^2}\frac{1}{\rho (1 - \rho)})$.
    Write
    $$
    \Delta^{s, \eta}(t) := \left[G^{s, \eta}(a^0, p^\infty) - G^{s, \eta}(t, p^\infty)\right] - \BE\left[G^{s, \eta}(a^0, p^\infty) - G^{s, \eta}(t, p^\infty)\right].
    $$
    Then, for $t \leq a^0$, 
    $$
    \Delta^{s, \eta}(t) = \sum_{i = t+1}^{a^0} (U^{s, \eta}(p^\infty)_i - \BE[U^{s, \eta}(p^\infty)_i]).
    $$
    We apply Lemma \ref{lem:union_bound} with $K = a^0$ and $Z_i = - U^{s, \eta}(p^\infty)_{a^0 - i + 1}$ for $i = 1, \ldots, K$.
    This yields
    \begin{align*}
        &\mspace{24mu} \BP\left[ \bigcup_{t = 1, \ldots, a^0} \left\{ \Delta^{s, \eta}(t) \leq - \log(\frac{1}{\eta^2} \frac{1}{\rho (1 - \rho)}) \sqrt{|t - a^0| \log(n)} \right\} \right] \\
        &= \BP\left[ \bigcup_{k = 1, \ldots, K} \left\{ \sum_{i = 1}^{k} Z_i - \BE[Z_i] \geq \log(\frac{1}{\eta^2} \frac{1}{\rho (1 - \rho)}) \sqrt{|k| \log(n)} \right\} \right] \leq \frac{K}{n^2} = \frac{a^0}{n^2}.
    \end{align*}
    Similarly, for $t \geq a^0 + 1$,
    $$
    \Delta^{s, \eta}(t) = \sum_{i = a^0 + 1}^{t} (U^{s, \eta}(p^\infty)_i - \BE[U^{s, \eta}(p^\infty)_i])
    $$
    and we can apply Lemma \ref{lem:union_bound} with $K = n - a^0$ and $Z_i = U_{a^0 + i}$ for $i = 1, \ldots, K$.
    This yields
    \begin{align*}
        &\mspace{24mu} \BP\left[ \bigcup_{t = a^0 + 1, \ldots, n} \left\{ \Delta^{s, \eta}(t) \leq - \log(\frac{1}{\eta^2} \frac{1}{\rho (1 - \rho)}) \sqrt{|t - a^0| \log(n)} \right\} \right] \\
        &= \BP\left[ \bigcup_{k = 1, \ldots, K} \left\{ \sum_{i = 1}^{k} Z_i - \BE[Z_i] \geq \log(\frac{1}{\eta^2} \frac{1}{\rho (1 - \rho)}) \sqrt{|k| \log(n)} \right\} \right] \leq \frac{K}{n^2} = \frac{n - a^0}{n^2}.
    \end{align*}
    Combining the two bounds, we obtain
    $$
    \BP\left[ \bigcup_{t = 1, \ldots, n} \left\{ \Delta(t) \leq - \log(\frac{1}{\eta^2} \frac{1}{\rho (1 - \rho)}) \sqrt{|t - a^0| \log(n)} \right\} \right] \leq \frac{n - a^0}{n^2} + \frac{a^0}{n^2} = \frac{1}{n}
    $$
    and thus
    $$
    \BP\left[ \bigcap_{t = 1, \ldots, n} \left\{ \Delta(t) > - \log(\frac{1}{\eta^2} \frac{1}{\rho (1 - \rho)}) \sqrt{|t - a^0| \log(n)} \right\} \right] \geq 1 -  \frac{1}{n}
    $$
    as desired.

    \paragraph*{Step 3:}
    Let $\varepsilon >0$. As $\log_\eta$ is $\frac{1}{\eta}$-Lipschitz, whenever
    $$
        \max_{t = 1, \ldots, n} |p_s^\infty(X_i)_1 - \hat p_s(X_i)_1| \leq \eta \rho (1 - \rho) \varepsilon 
    $$
    and thus also
    $$
    \max_{i=1, \ldots, n} | p_s^\infty(X_i)_2 - \hat p_s(X_i)_2 | = \max_{i=1, \ldots, n} | (1 - p_s^\infty(X_i)_1) - (1  - \hat p_s(X_i)_1) | \leq \eta \rho (1 - \rho) \varepsilon ,
    $$
    then
    $$
    \max_{i=1, \ldots, n} | U^{s, \eta}(p^\infty)_i - U^{s, \eta}(\hat p)_i | \leq \varepsilon ,
    $$
    and, for all $t = 1, \ldots, n$,
    \begin{equation*}
        \left| [G^{s, \eta}(a^0, \hat p) - G^{s, \eta}(t, \hat p)] - [G^{s, \eta}(a^0, p^\infty) - G^{s, \eta}(t, p^\infty)] \right| \leq |t - a^0| \varepsilon .
    \end{equation*}
    Thus, in particular,
    \begin{multline*}
        \BP\left[ \bigcap_{t=1, \ldots, n} \left\{ [G^{s, \eta}(a^0, \hat p) - G^{s, \eta}(t, \hat p)] - [G^{s, \eta}(a^0, p^\infty) - G^{s, \eta}(t, p^\infty)] \geq - |t - a^0| \varepsilon \right\} \right] \\ \geq
        \BP\left[ \bigcap_{t=1, \ldots, n} \left\{ \left| [G^{s, \eta}(a^0, \hat p) - G^{s, \eta}(t, \hat p)] - [G^{s, \eta}(a^0, p^\infty) - G^{s, \eta}(t, p^\infty)] \right| \leq |t - a^0| \varepsilon  \right\} \right] \\ \geq
        \BP\left[ \max_{t = 1, \ldots, n} |p_s^\infty(X_i)_1 - \hat p_s(X_i)_1| \leq \eta \rho (1 - \rho)\varepsilon  \right] \overset{n \to\infty}{\longrightarrow} 1
    \end{multline*}
    by the consistency assumption of $\hat p$, proving the claim.
    
    \paragraph*{Step 4:}
    We combine the results of steps 1, 2, and 3 to prove the claim. \\
    
    \noindent Let $t = 1, \ldots, n$. We bound $G(a^0, \hat p) - G(t, \hat p)$ from below with high probability.
    For this, we decompose
    \begin{multline*}
    G^{s, \eta}(a^0, \hat p) - G^{s, \eta}(t, \hat p) = 
    \underbrace{\BE[G^{s, \eta}(a^0, p^\infty) - G^{s, \eta}(t, p^\infty)]}_{A(t)}+ \\
    \underbrace{[G^{s, \eta}(a^0, p^\infty) - G^{s, \eta}(t, p^\infty)] - \BE[G^{s, \eta}(a^0, p^\infty) - G^{s, \eta}(t, p^\infty)]}_{B(t)} + \\
    \underbrace{[G^{s, \eta}(a^0, \hat p) - G^{s, \eta}(t, \hat p)] - [G^{s, \eta}(a^0, p^\infty) - G^{s, \eta}(t, p^\infty)]}_{C(t)}.
    \end{multline*}

    \noindent By step 1,
    $$
    \forall t = 1, \ldots, n: A(t) \geq 2 |t - a^0| \delta (1 - \eta)^2 \dTV{\BP_1}{\BP_2}^2.
    $$
    
    \noindent By step 2,
    \begin{equation*}
        \BP \left[ \forall t = 1, \ldots, n: B(t) > - \log(\frac{1}{\rho (1 - \rho)} \frac{1}{\eta^2}) \sqrt{|t - a^0| \log(n)} \right] \geq 1 - \frac{1}{n} \overset{n \to\infty}{\longrightarrow} 1.
    \end{equation*}
    
    \noindent By step 3, choosing $\varepsilon = \delta (1 - \eta)^2 \dTV{\BP_1}{\BP_2}^2$,
    \begin{equation*}
        \BP\left[ \forall t = 1, \ldots, n: C(t) \geq - |t - a^0| (1 - \eta)^2 \delta \dTV{\BP_1}{\BP_2}^2 \right] \overset{n\to\infty}{\longrightarrow} 1.
    \end{equation*}
    Thus
    \begin{multline*}
    \BP\Bigg[ \forall t = 1, \ldots, n: G^{s, \eta}(a^0, \hat p) - G^{s, \eta}(t, \hat p) = A(t) + B(t) + C(t) > \\ 
    |t - a^0| \delta (1 - \eta)^2 \dTV{\BP_1}{\BP_2}^2 - \log(\frac{1}{\rho (1 - \rho)} \frac{1}{\eta^2}) \sqrt{\log(n) |t - a^0|} \Bigg] \overset{n \to\infty}{\longrightarrow} 1
    \end{multline*}
    and
    \begin{equation*}
        \BP \left[ \forall t = 1, \ldots n \ \textrm{ s.t. } |t - a^0| \leq \frac{C \log(n)}{\delta^2 \dTV{\BP_1}{\BP_2}^4} : G^{s, \eta}(a^0, \hat p) > G^{s, \eta}(t, \hat p)\right] \overset{n \to \infty}{\longrightarrow} 1,
    \end{equation*}
    where $C = \frac{1}{(1 - \eta)^4} \log( \frac{n^2}{n-s} \frac{1}{\eta^2})^2$, implying
    $$
    \BP \left[ \left|\hat a - a^0 \right| > \frac{C \log(n)}{\delta^2 \dTV{\BP_1}{\BP_2}^4} \right] \overset{n \to \infty}{\longrightarrow} 0,
    $$
    as $\hat a \in \argmax_{t=1, \ldots, n} G^{s, \eta}(t, \hat p)$.

    \end{proof}

\newpage
\section{Figures}
\begin{figure}[H]
    \centering
    \includegraphics[width = \textwidth]{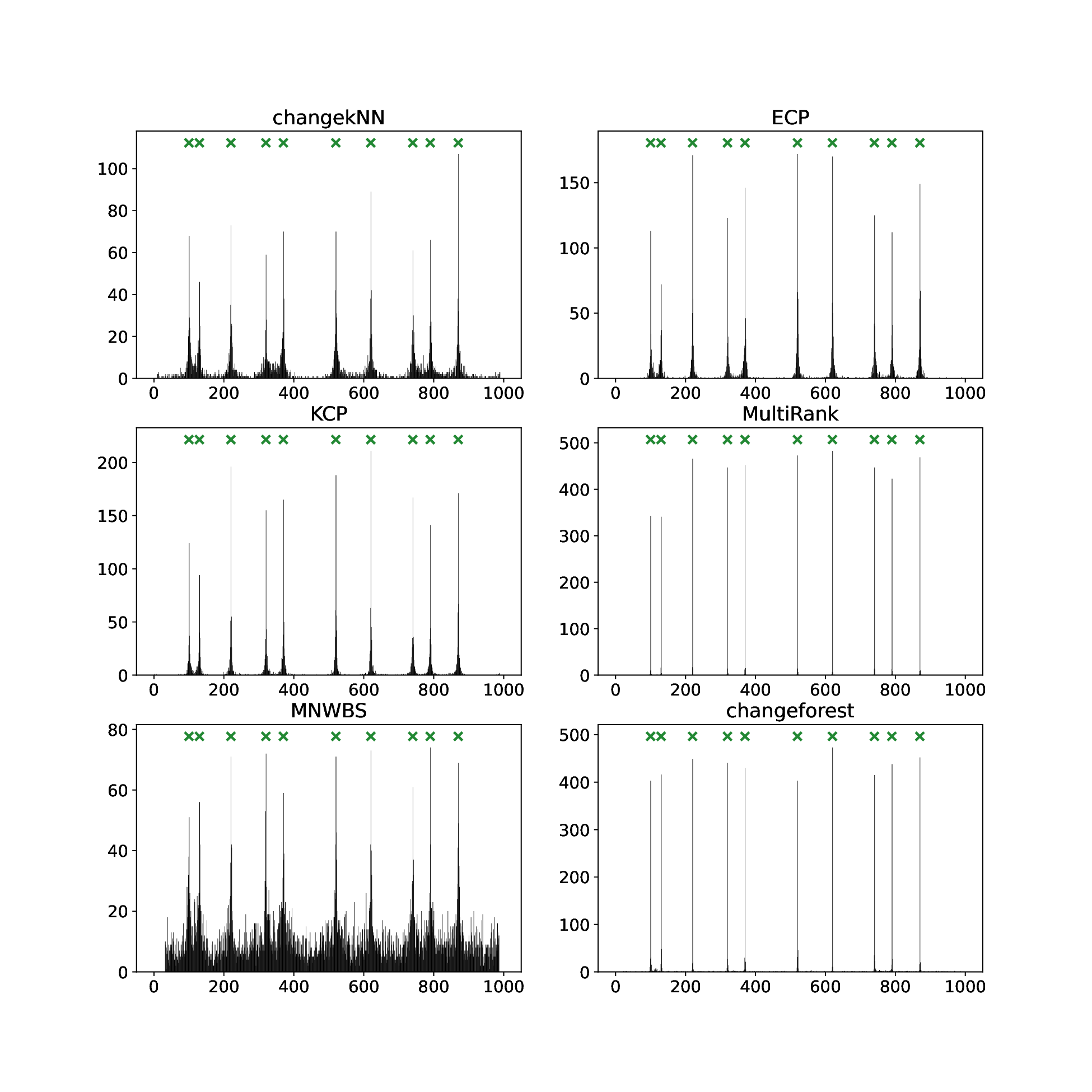}
    \vspace{-0.5cm}
    \caption{
        Histograms of cumulative change point estimates for 500 simulation runs on the Dirichlet simulation setup.
        The change in mean method is excluded as it did not produce any change point estimates for the Dirichlet simulation setup.
        The true change point locations are marked with green crosses.
        Note the different scales on the $y$-axis.
        \label{fig:histograms}
    }
\end{figure}

\newpage
\section{Tables}
\begin{table}[h]
    \caption{Adjusted Rand indices (ARI) and Hausdorff distances ($d_H$) for different segmentations. The true segmentation comes from the parametric change in mean (CIC, $\alpha^0=(0, 50, 100, 150)$)} and the nonparametric glass ($\alpha^0 = (0, 17, 46, 55, 68, 144, 214)$) setup.\label{tab:adj_rand_examples}
    \begin{small}
    \makebox[\textwidth][c]{
        \begin{tabular}{l l r r l}
            true segmentation & estimated segmentation & ARI & $d_H$ & comment \\ 
            \hline
            0, 50, 100, 150 & 0, 50, 100, 150 & 1.00 & 0.000 & perfect fit \\
            & 0, 52, 99, 150 & 0.94 & 0.013 & almost perfect fit \\
            & 0, 23, 50, 100, 150 & 0.87 & 0.153 & one extra change point \\
            & 0, 43, 87, 97, 150 & 0.75 & 0.087 & two extra change points \\
            & 0, 50, 150 & 0.57 & 0.333 & one missing change point \\
            & 0, 20, 70, 150 & 0.37 & 0.200 & random segmentation \\
            0, 17, 46, 55, 68, 144, 214 & 0, 17, 46, 55, 68, 144, 214 & 1.00 & 0.000 & perfect fit \\
            & 0, 15, 45, 55, 68, 142, 214 & 0.95 & 0.009 & almost perfect fit \\
            & 0, 17, 46, 55, 68, 80, 144, 214 & 0.91 & 0.056 & one extra change point \\
            & 0, 17, 46, 55, 68, 100, 144, 214 & 0.83 & 0.150 & one extra change point \\
            & 0, 46, 55, 68, 144, 214 & 0.95 & 0.079 & one missing change point \\
            & 0, 17, 46, 55, 144, 214 & 0.89 & 0.061 & one missing change point \\
            & 0, 50, 100, 150, 214 & 0.61 & 0.150 & random segmentation \\
            & 0, 214 & 0.00 & 0.327 & no segmentation \\
        \end{tabular}
    }
    \end{small}
\end{table}
\begin{table}[h]

    \caption{
        \label{tab:main_results_hausdorff_median}
        Median relative Hausdorff distance over 500 simulations (in percent), with median absolute deviations in parentheses. Optimal scores are marked in bold.
    }
    \makebox[\textwidth][c]{
        \begin{tabular}{lrrrrr}
         &        CIM &         CIC &   Dirichlet &        iris &        glass \\
        \midrule
        change in mean &  \textbf{0.0} (0.1) &  33.3 (0.0)              &  48.0 (0.0)               &   1.3 (4.9)           &   16.8 (1.5) \\
        changekNN      &  \textbf{0.0} (1.0) &  33.3 (0.6)              &   6.6 (7.7)               &   \textbf{0.0} (1.5)  &    7.9 (3.4) \\
        ECP            &  \textbf{0.0} (1.0) &  32.0 (9.3)              &   5.2 (2.8)               &   \textbf{0.0} (1.2)  &  31.1 (13.2) \\
        KCP            &  \textbf{0.0} (0.3) &  \textbf{0.7} (2.4)      &   5.0 (2.5)               &   \textbf{0.0} (0.8)  &   7.9 (10.1) \\
        MultiRank      &  \textbf{0.0} (0.0) &  33.3 (0.3)              &   3.0 (1.3)               &  15.3 (4.0)           &   15.9 (2.1) \\
        MNWBS        &  \textbf{0.0} (0.7) & 15.3 (12.5)              &   5.9 (0.9)               &  \textbf{0.0} (0.5)   &    8.9 (5.9) \\
        \texttt{changeforest} &  \textbf{0.0} (1.8) & 1.3 (3.7)         &   \textbf{0.2} (1.4)      &   \textbf{0.0} (2.1)  &    \textbf{2.8} (3.7) \\
        \end{tabular}
    }
    \newline
    \vspace{0.2cm}
    \newline
    \makebox[\textwidth][c]{
        \begin{tabular}{lrrrrr}
         & breast cancer &    abalone &       wine &  dry beans & average \\
        \midrule
        change in mean &            16.7 (1.2) &                      5.0 (1.8) &              0.4 (1.4) &             1.3 (1.3) &    13.7 \\
        changekNN      &             \textbf{0.0} (7.5) &            5.0 (3.5) &              1.5 (1.7) &             \textbf{0.0} (0.6) &     6.0 \\
        ECP            &             \textbf{0.0} (0.8) &            6.2 (2.1) &              0.3 (1.7) &             \textbf{0.0} (0.1) &     8.1 \\
        KCP            &             \textbf{0.0} (0.4) &            4.2 (1.6) &              0.2 (1.4) &             \textbf{0.0} (0.0) &     2.0 \\
        MultiRank      &             \textbf{0.0} (7.3) &            9.1 (4.4) &              3.0 (1.1) &                                &    10.0 \\
        MNWBS      &               \textbf{0.0} (0.7) &                      &                        &                                &    5.0 \\
        \texttt{changeforest}  &     \textbf{0.0} (2.1) &   \textbf{3.1} (1.9) &     \textbf{0.0} (0.9) &             \textbf{0.0} (0.6) &     \textbf{0.9} \\
        \end{tabular}
    }
\end{table}
\begin{table}[h]
    \caption{
        \label{tab:main_results_n_cpts}
        The average number of change points estimated over 500 simulations, with standard deviations in parentheses.
    }
    \makebox[\textwidth][c]{
        \begin{tabular}{lrrrrr}
             &        CIM &        CIC &   Dirichlet &        iris &       glass \\
            \midrule
            ground truth   &  2           &       2      &        10     &         2     &     5  \\
            \midrule
            change in mean &  2.00 (0.00) &  0.00 (0.00) &   0.00 (0.00) &   2.99 (1.40) &  22.06 (7.19) \\
            changekNN      &  2.12 (0.41) &  0.58 (1.24) &   8.01 (2.68) &   2.11 (0.35) &   4.16 (0.97) \\
            ECP            &  2.07 (0.34) &  1.21 (1.05) &   7.70 (1.07) &   2.09 (0.36) &   2.58 (1.10) \\
            KCP            &  2.03 (0.18) &  2.39 (0.66) &   8.39 (1.16) &   2.05 (0.31) &   3.44 (1.12) \\
            MultiRank      &  2.00 (0.00) &  0.41 (3.72) &   9.18 (0.46) &  13.31 (8.00) &  21.74 (3.48) \\
            MNWBS         &  2.14 (1.05) &  1.81 (1.25) &   21.96 (4.82) &  2.05 (0.60) &   5.26 (3.05) \\
            \texttt{changeforest}   &  2.13 (0.38) &  2.38 (0.67) &  10.54 (0.77) &   2.15 (0.42) &   4.93 (0.78) \\
        \end{tabular}
    }
    \newline
    \vspace*{0.2cm}
    \newline

    \makebox[\textwidth][c]{
        \begin{tabular}{lrrrr}
            data set & breast cancer &     abalone &       wine &  dry beans \\
            \midrule
            ground truth   &  1              &     14        &      4     &         6     \\
            \midrule
            change in mean &    16.62 (3.94) &  11.07 (1.38) &  3.65 (0.52) &  7.04 (0.98) \\
            changekNN      &     1.49 (0.90) &  11.74 (1.76) &  4.52 (1.00) &  6.21 (0.49) \\
            ECP            &     1.07 (0.36) &   9.73 (1.10) &  3.71 (0.57) &  6.05 (0.25) \\
            KCP            &     1.01 (0.14) &  10.62 (1.07) &  3.60 (0.49) &  6.00 (0.04) \\
            MultiRank      &    8.28 (10.91) &  11.77 (1.93) &  3.00 (0.71) &          \\
            MNWBS        &      1.05 (0.57) &             &                &        \\
            \texttt{changeforest}   &     1.08 (0.34) &  12.09 (1.37) &  4.21 (0.49) &  6.23 (0.47) \\
        \end{tabular}
    }
\end{table}

\begin{table}[h]
\caption{
      \label{tab:tuning_parameters}
      Average adjusted Rand indices of \texttt{changeforest} for different tuning parameters over 500 simulations, with standard deviations in parentheses.
      Optimal scores and those within two standard deviations are marked in bold.
      When necessary, mtry, the number of features evaluated at each split, was rounded down to the next integer.
}
\makebox[\textwidth][c]{
      \begin{tabular}{llllllll}
      trees & max. & mtry     &           CIM         &            CIC        &      Dirichlet        &           iris           &          glass \\
            & depth      &          &                       &                       &                       &                          &          \\
            \midrule
        20  & 2         & 1         &           0.99 (0.03) &           0.13 (0.29) &           0.97 (0.02) &           0.98 (0.04) &           0.89 (0.08) \\
            &           & $\sqrt d$ &           0.99 (0.03) &           0.19 (0.34) &           0.98 (0.02) &           0.98 (0.04) &           0.90 (0.08) \\
            &           & $d$       &           0.99 (0.03) &           0.19 (0.34) &           0.98 (0.02) &           0.99 (0.03) &           0.91 (0.08) \\
            & 8         & 1         &           \textbf{0.99} (0.02) &  0.29 (0.41) &           0.98 (0.01) &           \textbf{0.99} (0.02) &  0.89 (0.08) \\
            &           & $\sqrt d$ &           \textbf{0.99} (0.02) &  0.27 (0.40) &           \textbf{0.99} (0.01) &  \textbf{0.99} (0.02) &  0.89 (0.10) \\
            &           & $d$       &           0.99 (0.05) &           0.21 (0.37) &           \textbf{0.99} (0.01) &  \textbf{0.99} (0.02) &  0.87 (0.13) \\
            & $\infty$  & 1         &           0.87 (0.27) &           0.00 (0.00) &           0.98 (0.02) &           \textbf{0.99} (0.02) &  0.86 (0.15) \\
            &           & $\sqrt d$ &           0.85 (0.30) &           0.00 (0.00) &           \textbf{0.99} (0.01) &  \textbf{0.99} (0.02) &  0.87 (0.12) \\
            &           & $d$       &           0.78 (0.36) &           0.00 (0.00) &           \textbf{0.99} (0.01) &  \textbf{0.99} (0.02) &  0.85 (0.16) \\
      100   & 2         & 1         &           0.98 (0.04) &           0.50 (0.44) &           0.98 (0.02) &           0.98 (0.04) &           0.90 (0.08) \\
            &           & $\sqrt d$ &           0.98 (0.04) &           0.52 (0.44) &           0.98 (0.02) &           0.98 (0.04) &           0.91 (0.07) \\
            &           & $d$       &           0.98 (0.04) &           0.50 (0.43) &           0.98 (0.02) &           0.98 (0.04) &           \textbf{0.92} (0.07) \\
            & 8         & 1         &           0.99 (0.03) &           0.93 (0.13) &           0.99 (0.02) &           0.98 (0.04) &           \textbf{0.92} (0.07) \\
            &           & $\sqrt d$ &           0.99 (0.03) &           0.92 (0.12) &           0.99 (0.02) &           0.98 (0.04) &           \textbf{0.92} (0.07) \\
            &           & $d$       &           \textbf{0.99} (0.03) &  0.89 (0.20) &           0.98 (0.02) &           0.99 (0.03) &           0.92 (0.08) \\
            &  $\infty$ & 1         &           0.98 (0.04) &           0.88 (0.22) &           0.99 (0.02) &           0.98 (0.04) &           0.92 (0.08) \\
            &           & $\sqrt d$ &           0.99 (0.04) &           0.88 (0.22) &           \textbf{0.99} (0.02) &  0.98 (0.04) &           \textbf{0.92} (0.07) \\
            &           & $d$       &           0.99 (0.03) &           0.82 (0.31) &           0.98 (0.02) &           0.99 (0.04) &           0.92 (0.08) \\
      500   & 2         & 1         &           0.98 (0.05) &           0.53 (0.43) &           0.98 (0.02) &           0.98 (0.05) &           0.91 (0.07) \\
            &           & $\sqrt d$ &           0.98 (0.04) &           0.57 (0.43) &           0.98 (0.03) &           0.98 (0.04) &           0.92 (0.07) \\
            &           & $d$       &           0.98 (0.04) &           0.52 (0.43) &           0.97 (0.02) &           0.98 (0.04) &           0.92 (0.07) \\
            & 8         & 1         &           0.98 (0.04) &           \textbf{0.95} (0.05) &  0.99 (0.02) &           0.98 (0.05) &           \textbf{0.92} (0.07) \\
            &           & $\sqrt d$ &           0.98 (0.04) &           0.94 (0.07) &           0.99 (0.02) &           0.98 (0.04) &           \textbf{0.92} (0.07) \\
            &           & $d$       &           0.99 (0.04) &           0.93 (0.11) &           0.98 (0.02) &           0.99 (0.04) &           \textbf{0.92} (0.08) \\
            &  $\infty$ & 1         &           0.98 (0.04) &           0.94 (0.08) &           \textbf{0.99} (0.02) &  0.98 (0.04) &           \textbf{0.92} (0.07) \\
            &           & $\sqrt d$ &           0.98 (0.04) &           \textbf{0.94} (0.07) &  0.99 (0.02) &           0.98 (0.04) &           \textbf{0.92} (0.07) \\
            &           & $d$       &           0.98 (0.04) &           0.93 (0.11) &           0.98 (0.02) &           0.99 (0.04) &           0.92 (0.08) \\
      \end{tabular}
}
\vspace*{0.5cm}
Continued on the next page.
\end{table}

\begin{table}[h]
      \captionsetup{labelformat=empty}
      \caption{Table \ref{tab:tuning_parameters} (Continued):
      Average adjusted Rand indices of \texttt{changeforest} for different tuning parameters over 500 simulations, with standard deviations in parentheses.
      Optimal scores and those within two standard deviations are marked in bold.
      When necessary, mtry, the number of features evaluated at each split, was rounded down to the next integer.}
      \addtocounter{table}{-1}
      \makebox[\textwidth][c]{
            \begin{tabular}{llllllll}
            trees & max. & mtry     & breast            &        abalone             &   wine                   &      dry          &   average \\
                  & depth      &     &  cancer                     &                       &                       & beans                         &          \\ 
                  \midrule
            20  & 2           & 1         &           0.99 (0.05) &           0.87 (0.08) &           0.98 (0.02) &           1.00 (0.01) &           0.87 (0.11) \\
                  &           & $\sqrt d$ &           0.99 (0.04) &           0.91 (0.06) &           0.99 (0.01) &           1.00 (0.01) &           0.88 (0.12) \\
                  &           & $d$       &           0.99 (0.05) &           0.92 (0.05) &           0.99 (0.02) &           1.00 (0.01) &           0.88 (0.12) \\
                  & 8         & 1         &           1.00 (0.03) &           0.84 (0.11) &           1.00 (0.01) &  \textbf{1.00} (0.00) &           0.89 (0.14) \\
                  &           & $\sqrt d$ &           0.99 (0.04) &           0.85 (0.11) &           1.00 (0.01) &  \textbf{1.00} (0.00) &           0.89 (0.14) \\
                  &           & $d$       &           0.99 (0.04) &           0.81 (0.13) &  \textbf{1.00} (0.00) &  \textbf{1.00} (0.00) &           0.87 (0.14) \\
                  & $\infty$  & 1         &  \textbf{1.00} (0.01) &           0.08 (0.14) &           0.99 (0.01) &  \textbf{1.00} (0.00) &           0.75 (0.11) \\
                  &           & $\sqrt d$ &  \textbf{1.00} (0.01) &           0.09 (0.14) &           0.99 (0.01) &  \textbf{1.00} (0.00) &           0.75 (0.12) \\
                  &           & $d$       &  \textbf{1.00} (0.01) &           0.07 (0.12) &           0.99 (0.01) &  \textbf{1.00} (0.00) &           0.74 (0.14) \\
            100   & 2         & 1         &           0.98 (0.09) &           0.91 (0.05) &           0.98 (0.04) &           0.99 (0.02) &           0.91 (0.15) \\
                  &           & $\sqrt d$ &           0.98 (0.08) &           0.93 (0.04) &           0.98 (0.04) &           0.99 (0.02) &           0.92 (0.15) \\
                  &           & $d$       &           0.97 (0.08) &           0.94 (0.04) &           0.98 (0.04) &           0.99 (0.03) &           0.92 (0.15) \\
                  & 8         & 1         &           0.98 (0.08) &           0.93 (0.05) &           0.99 (0.03) &           0.99 (0.01) &  \textbf{0.97} (0.06) \\
                  &           & $\sqrt d$ &           0.98 (0.07) &           0.93 (0.05) &           0.99 (0.02) &           1.00 (0.01) &  \textbf{0.97} (0.06) \\
                  &           & $d$       &           0.98 (0.07) &           0.93 (0.06) &           0.99 (0.02) &           1.00 (0.01) &           0.96 (0.08) \\
                  & $\infty$  & 1         &           1.00 (0.02) &           0.88 (0.10) &  \textbf{1.00} (0.00) &  \textbf{1.00} (0.01) &           0.96 (0.09) \\
                  &           & $\sqrt d$ &           0.99 (0.03) &           0.90 (0.07) &  \textbf{1.00} (0.01) &           1.00 (0.01) &           0.96 (0.08) \\
                  &           & $d$       &           1.00 (0.02) &           0.89 (0.08) &  \textbf{1.00} (0.00) &           0.99 (0.02) &           0.95 (0.11) \\
            500   & 2         & 1         &           0.98 (0.09) &           0.92 (0.05) &           0.97 (0.06) &           0.98 (0.03) &           0.91 (0.15) \\
                  &           & $\sqrt d$ &           0.97 (0.09) &           0.93 (0.04) &           0.97 (0.06) &           0.98 (0.03) &           0.92 (0.15) \\
                  &           & $d$       &           0.97 (0.09) &  \textbf{0.94} (0.04) &           0.97 (0.06) &           0.98 (0.03) &           0.92 (0.15) \\
                  & 8         & 1         &           0.97 (0.10) &  \textbf{0.93} (0.05) &           0.98 (0.04) &           0.99 (0.02) &  \textbf{0.97} (0.05) \\
                  &           & $\sqrt d$ &           0.98 (0.08) &  \textbf{0.94} (0.04) &           0.99 (0.04) &           0.99 (0.02) &  \textbf{0.97} (0.05) \\
                  &           & $d$       &           0.98 (0.08) &           0.93 (0.05) &           0.99 (0.02) &           1.00 (0.01) &  \textbf{0.97} (0.06) \\
                  & $\infty$  & 1         &           0.99 (0.03) &           0.91 (0.06) &  \textbf{1.00} (0.01) &           1.00 (0.01) &  \textbf{0.97} (0.05) \\
                  &           & $\sqrt d$ &           0.99 (0.03) &           0.93 (0.05) &  \textbf{1.00} (0.01) &           0.99 (0.02) &  \textbf{0.97} (0.04) \\
                  &           & $d$       &           0.99 (0.03) &           0.92 (0.06) &           1.00 (0.01) &           0.99 (0.02) &  \textbf{0.97} (0.05) \\
            \end{tabular}
      }
      \vspace*{-2mm}
\end{table}    

\begin{table}[h]
    \caption{
        \label{tab:tuning_parameters_time}
        Average computational times of \texttt{changeforest} in seconds on 8 Intel Xeon 2.3 GHz cores.
        When necessary, mtry, the number of features evaluated at each split, was rounded down to the next integer.
    }
    \makebox[\textwidth][c]{
    \begin{tabular}{llllllllllll}
    trees & max. & mtry &   CIM &   CIC & Dirichlet &  iris & glass &  wine & breast & abalone & dry \\
          & depth&      &       &       &           &       &       &       & cancer &         & beans \\
    \midrule
        20  & 2 & 1 &  0.03 &  0.01 &      0.10 &  0.01 &  0.02 &          0.02 &    0.23 &  0.24 &      0.64 \\
        &      & $\sqrt{d}$ &  0.02 &  0.01 &      0.11 &  0.01 &  0.02 &          0.02 &    0.24 &  0.25 &      0.68 \\
        &      & $d$ &  0.03 &  0.01 &      0.14 &  0.01 &  0.02 &          0.02 &    0.27 &  0.27 &      0.86 \\
            & 8 & 1 &  0.03 &  0.02 &      0.16 &  0.01 &  0.02 &          0.03 &    0.32 &  0.34 &      0.84 \\
            &      & $\sqrt{d}$ &  0.03 &  0.02 &      0.16 &  0.01 &  0.03 &          0.03 &    0.35 &  0.36 &      0.97 \\
            &      & $d$ &  0.04 &  0.03 &      0.21 &  0.01 &  0.03 &          0.03 &    0.42 &  0.45 &       1.5 \\
            & $\infty$ & 1 &  0.04 &  0.02 &      0.18 &  0.01 &  0.03 &          0.03 &    0.23 &  0.57 &       1.2 \\
            &      & $\sqrt{d}$ &  0.04 &  0.02 &      0.17 &  0.01 &  0.03 &          0.03 &    0.27 &  0.57 &       1.4 \\
            &      & $d$ &  0.05 &  0.02 &      0.22 &  0.01 &  0.03 &          0.03 &    0.31 &  0.72 &       2.1 \\
        100 & 2 & 1 &  0.04 &  0.03 &      0.22 &  0.02 &  0.03 &          0.04 &    0.43 &  0.45 &       1.3 \\
        &      & $\sqrt{d}$ &  0.04 &  0.03 &      0.24 &  0.02 &  0.04 &          0.04 &    0.46 &  0.48 &       1.4 \\
        &      & $d$ &  0.05 &  0.04 &      0.33 &  0.02 &  0.04 &          0.04 &    0.56 &  0.56 &       2.1 \\
            & 8 & 1 &  0.07 &  0.08 &      0.43 &  0.02 &  0.06 &          0.07 &    0.77 &  0.79 &       2.1 \\
            &      & $\sqrt{d}$ &  0.08 &  0.09 &      0.43 &  0.02 &  0.06 &          0.07 &    0.85 &  0.87 &       2.6 \\
            &      & $d$ &  0.10 &  0.10 &      0.57 &  0.02 &  0.07 &          0.07 &     1.1 &   1.2 &       4.7 \\
            & $\infty$ & 1 &  0.09 &  0.10 &      0.50 &  0.03 &  0.06 &          0.09 &     1.3 &   1.6 &       3.6 \\
            &      & $\sqrt{d}$ &  0.10 &  0.11 &      0.46 &  0.02 &  0.06 &          0.09 &     1.4 &   1.6 &       4.2 \\
            &      & $d$ &  0.11 &  0.12 &      0.60 &  0.03 &  0.07 &          0.09 &     1.7 &   2.2 &       7.4 \\
        500 & 2 & 1 &  0.11 &  0.08 &      0.67 &  0.04 &  0.10 &          0.10 &     1.2 &   1.4 &       4.3 \\
        &      & $\sqrt{d}$ &  0.12 &  0.09 &      0.75 &  0.04 &  0.10 &          0.11 &     1.3 &   1.5 &       5.0 \\
        &      & $d$ &  0.14 &  0.11 &       1.1 &  0.05 &  0.12 &          0.12 &     1.7 &   1.9 &       8.2 \\
            & 8 & 1 &  0.25 &  0.27 &       1.6 &  0.08 &  0.20 &          0.22 &     2.7 &   3.0 &       8.4 \\
            &      & $\sqrt{d}$ &  0.27 &  0.30 &       1.6 &  0.08 &  0.20 &          0.22 &     3.1 &   3.4 &        11 \\
            &      & $d$ &  0.33 &  0.37 &       2.2 &  0.08 &  0.22 &          0.25 &     4.2 &   5.0 &        21 \\
            & $\infty$ & 1 &  0.32 &  0.36 &       1.9 &  0.08 &  0.22 &          0.30 &     4.9 &   6.8 &        15 \\
            &      & $\sqrt{d}$ &  0.33 &  0.38 &       1.7 &  0.08 &  0.21 &          0.29 &     5.3 &   7.0 &        18 \\
            &      & $d$ &  0.38 &  0.44 &       2.3 &  0.08 &  0.22 &          0.31 &     6.7 &   9.7 &        33 \\
    \end{tabular}
    }
\end{table}

\begin{table}[h]
    \caption{
        \label{tab:tuning_parameters_kcp}
        Average adjusted Rand indices of KCP \citep{arlot2019kernel} for different tuning parameters over 500 simulations, with standard deviations in parentheses.
        Optimal scores and those within two standard deviations are marked in bold.
        Note that for all nonparametric simulation setups, the covariates have been normalized by the median absolute deviation of absolute consecutive differences, such that the reported bandwidths are relative to the scale of the covariates.
        For the bandwidth \emph{median}, the bandwidth was chosen according to the median heuristic \citep{garreau2017large}.
        Here, differently from the normalization based on consecutive differences, the median is taken over all possible pairwise distances and thus mostly contains differences across different distributions.
        As a consequence, the bandwidths selected by the median heuristic tend to be large (greater than 1).
        The oracle is the best performing hyperparameter combination.
        We used a Gaussian kernel with bandwidth 0.1 in the main simulations.
    }
    \makebox[\textwidth][c]{
      \begin{tabular}{llllllll}
            kernel   & bandwidth & CIM       &   CIC        &      Dirichlet      &           iris          &          glass \\
            \midrule
            cosine   &        &  0.80 (0.17) &  0.36 (0.25) &         0.86 (0.08)           &  0.03 (0.00) &  0.03 (0.00) \\
            linear   &        &  0.99 (0.03) &  0.34 (0.28) &         \textbf{0.87} (0.08)  &  0.85 (0.16) &  0.37 (0.15) \\
            Gaussian & 0.025  &  0.99 (0.03) &  0.38 (0.36) &         0.80 (0.11)           &  0.94 (0.11) &  \textbf{0.83} (0.16) \\
            Gaussian & 0.05   &  1.00 (0.02) &  0.73 (0.33) &         \textbf{0.87} (0.08)  &  0.98 (0.06) &  \textbf{0.83} (0.17) \\
            Gaussian & 0.1    &  1.00 (0.01) &  0.95 (0.06) &         \textbf{0.87} (0.08)  &  0.99 (0.02) &  0.74 (0.23) \\
            Gaussian & 0.2    &  1.00 (0.00) &  0.98 (0.03) &         \textbf{0.87} (0.08)  &  \textbf{1.00} (0.01) &  0.59 (0.26) \\
            Gaussian & 0.4    &  1.00 (0.00) &  \textbf{0.98} (0.02) & \textbf{0.87} (0.08) &  0.99 (0.02) &  0.46 (0.28) \\
            Gaussian & 0.8    &  \textbf{1.00} (0.00) &  0.97 (0.09) & \textbf{0.87} (0.08) &  0.83 (0.22) &  0.15 (0.25) \\
            Gaussian & median &  0.00 (0.00) &  0.00 (0.00) &         \textbf{0.87} (0.08)  &  0.00 (0.00) &  0.00 (0.00) \\
            \bottomrule
            oracle      &     &  1.00 (0.00) &  0.98 (0.02) &  0.87 (0.08) &  1.00 (0.01) &  0.83 (0.17) \\
      \end{tabular}
    }
    \newline
    \vspace*{0.2cm}
    \newline
    \makebox[\textwidth][c]{
        \begin{tabular}{lllllll}
            kernel      & bandwidth &  breast cancer          &        abalone          &           wine          &      dry beans &        average \\
            \midrule
            cosine      &           &   0.04 (0.01) &  0.39 (0.04) &  0.16 (0.02) &  0.79 (0.08) &  0.38 (0.11) \\
            linear      &           &   0.97 (0.05) &  0.84 (0.06) &  0.63 (0.16) &  0.96 (0.02) &  0.76 (0.14) \\
            Gaussian    &    0.025  &   \textbf{1.00} (0.00) &  0.89 (0.05) &  0.99 (0.02) &  \textbf{1.00} (0.00) &  0.87 (0.14) \\
            Gaussian    &    0.05   &   \textbf{1.00} (0.00) &  \textbf{0.90} (0.05) &  \textbf{0.99} (0.01) &  \textbf{1.00} (0.00) &  0.92 (0.13) \\
            Gaussian    &    0.1    &   1.00 (0.03) &  \textbf{0.92} (0.04) &  \textbf{0.99} (0.01) &  \textbf{1.00} (0.00) &  \textbf{0.94} (0.09) \\
            Gaussian    &    0.2    &   0.99 (0.06) &  0.92 (0.04) &  \textbf{0.99} (0.01) &  \textbf{1.00} (0.00) &  0.93 (0.09) \\
            Gaussian    &    0.4    &   0.99 (0.05) &  0.88 (0.08) &  0.97 (0.03) &  \textbf{1.00} (0.00) &  0.91 (0.10) \\
            Gaussian    &    0.8    &   0.99 (0.05) &  0.82 (0.10) &  0.92 (0.03) &  0.97 (0.04) &  0.84 (0.12) \\
            Gaussian    &    median &   0.98 (0.06) &  0.61 (0.13) &  0.00 (0.03) &  0.00 (0.00) &  0.27 (0.06) \\
            \bottomrule
            oracle      &       &  1.00 (0.00)          &  0.92 (0.04)          &    0.99 (0.01)        &  1.00 (0.00)          &  0.95 (0.07) \\
        \end{tabular}
    }
\end{table}

\end{document}